\theoremstyle{plain}
\newtheorem{theorem}{Theorem}[section]
\newtheorem{lemma}[theorem]{Lemma}
\newtheorem{claim}[theorem]{Claim}
\theoremstyle{definition}
\newtheorem{hypothesis}[theorem]{Hypothesis}
\theoremstyle{remark}
\newcommand{\refappendix}[1]{\hyperref[#1]{Appendix~\ref*{#1}}}
\newcommand{\mypara}[1]{\noindent{\bf {#1}.}}
\newcommand{\positive}{\cellcolor{red!10}}
\newcommand{\cmark}{\ding{51}}
\newcommand{\xmark}{\ding{55}}
\newcommand{\soft}{\texttt{PlugAE}\xspace}
\newcommand{\na}{-\xspace}
\newcommand{\gpt}{gpt-4-turbo-2024-04-09\xspace}
\newcommand{\claude}{claude-3-5-sonnet-20240620\xspace}
\newcommand{\token}{copyright token\xspace}
\newcommand{\tokens}{copyright tokens\xspace}
\begin{document}

\date{}

\title{\bf The Challenge of Identifying the Origin of Black-Box\\ Large Language Models}

\author{
Ziqing Yang\textsuperscript{1}\ \ \
Yixin Wu\textsuperscript{1}\ \ \
Yun Shen\textsuperscript{2}\ \ \
Wei Dai\textsuperscript{3}\ \ \
Michael Backes\textsuperscript{1}\ \ \
Yang Zhang\textsuperscript{1}
\\
\\
\textsuperscript{1}\textit{CISPA Helmholtz Center for Information Security}\ \ \ 
\textsuperscript{2}\textit{NetApp}\ \ \
\textsuperscript{3}\textit{TikTok Inc.}
}

\maketitle

\begin{abstract}
The tremendous commercial potential of large language models (LLMs) has heightened concerns about their unauthorized use.
Third parties can customize LLMs through fine-tuning and offer only black-box API access, effectively concealing unauthorized usage and complicating external auditing processes.
This practice not only exacerbates unfair competition, but also violates licensing agreements.
In response, identifying the origin of black-box LLMs is an intrinsic solution to this issue.
In this paper, we first reveal the limitations of state-of-the-art passive and proactive identification methods with experiments on 30 LLMs and two real-world black-box APIs.
Then, we propose the proactive technique, \soft, which optimizes adversarial token embeddings in a continuous space and proactively plugs them into the LLM for tracing and identification.
The experiments show that \soft can achieve substantial improvement in identifying fine-tuned derivatives.
We further advocate for legal frameworks and regulations to better address the challenges posed by the unauthorized use of LLMs.\footnote{We will release our code to facilitate research in the field.}
\end{abstract}

\section{Introduction}

Large language models (LLMs) have become integral parts of a wide range of services owing to their versatile capabilities in various domains.
Facing the huge potential of LLMs, black-box LLM services~\cite{chatgpt,claude,GitHub_Copilot,OpenAIAPI} that provide users only with API access are emerging as well as open-source LLMs~\cite{TLIMLLRGHARJGL23,JSMBCCBLLSLLSSLWLS23,MHDBPSRKLTHCRBBCSHTBPTSLCCCIRBNNYTMRMTGAKLLSBCFCa24,GBWBKTJIMWAAACCDEGHKMMMNNPPRSSSSSWDLRZDLSSH24}, fostering greater accessibility and transparency in the field.
As shown in \autoref{table:licenses}, we thoroughly reviewed the licenses of mainstream LLMs and LLM services from six organizations and observed that most organizations, such as OpenAI and Meta, prohibit third parties from rebranding their LLMs and impose restrictions on their commercial use.
Despite these regulations, unofficial channels or unethical practices are dedicated to unauthorized releases of open-source models or leaks of proprietary black-box services~\cite{GULYO24, JZSLH24}.
These actors may fine-tune these LLMs for specific tasks and provide black-box services without properly branding their model (see \autoref{figure:scenario}).
This violates the licensing agreements, exacerbates unfair competition, stifles innovation, and creates barriers to collaboration.

\begin{figure}[!t]
\centering
\includegraphics[width=\columnwidth]{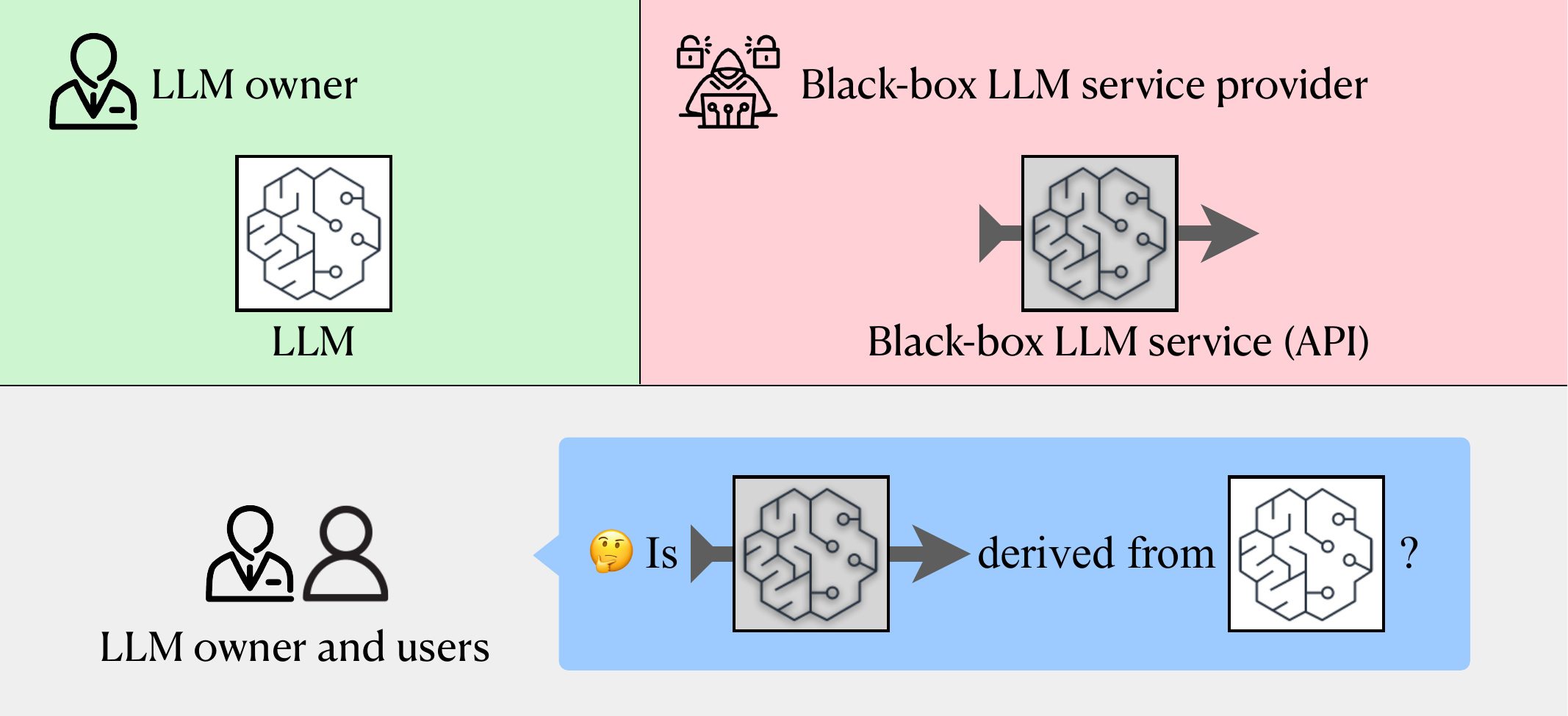}
\caption{A black-box LLM identification scenario.}
\label{figure:scenario}
\end{figure}

Identifying the origin of black-box LLMs is an intrinsic solution to the problem.
In this work, we focus on the task of identifying the origin of a black-box LLM that might have been further fine-tuned on specific datasets.
Specifically, given a \emph{base} LLM, we aim to identify whether the black-box \emph{suspect} models are its derivatives.
We collect four LLMs as base models, and 30 LLMs and two LLM API services as suspect models for experiments.
Existing identification methods can be categorized into two classes, i.e., \emph{passive} and \emph{proactive}, depending on whether the base LLM is proactively modified before release (see \autoref{figure:identification_overview} in the Appendix).
Specifically, passive identification refers to methods that make identification without knowing whether the base model is modified or not~\cite{DCS21, ZZWL23, MSSA24, GULYO24, JZSLH24}.
Passive methods perform identification by querying with optimized text input, e.g., adversarial text, without making any changes to the model~\cite{GULYO24, JZSLH24}.
Proactive methods modify the base models and identify the origins of suspect models based on the modifications~\cite{GHZCH22, LCLDZL23, YBZS24, XWMKXC24}.

\begin{table*}[!t]
\centering
\caption{Usage scenarios by each organization's model/API licenses.
``\na'' denotes \emph{not applicable}.
}
\label{table:licenses}
\renewcommand{\arraystretch}{1}
\setlength{\tabcolsep}{1mm}
\scalebox{0.97}{
\begin{threeparttable}
\begin{tabular}{lcccccc}
\toprule
\multirow{2}{*}{Usage Scenario} & \multicolumn{6}{c}{Organization} \\
& OpenAI~\cite{openai_license} & Anthropic~\cite{anthropic_license} & Google~\cite{gemini_license} & Mistral AI~\cite{mistral_license} & Meta~\cite{llama_license} & Ai2~\cite{ai2_license} \\
\midrule
Open-Source Model\tnote{1} & \xmark & \xmark & \cmark & \cmark & \cmark & \cmark \\
Query-Only API Service & \cmark & \cmark & \cmark & \cmark & \xmark & \xmark \\
Fine-Tuning API Service & \cmark & \cmark & \cmark & \cmark & \xmark & \xmark \\
\midrule
Commercial Use & \xmark & \xmark & \na & \xmark & \cmark & \xmark \\
Commercial Use (Special Permission) & \cmark & \na & \cmark & \cmark & \cmark & \cmark \\
Rebranding & \xmark & \xmark & \xmark & \xmark & \xmark & \xmark\tnote{2} \\
\bottomrule
\end{tabular}
\begin{tablenotes}
    \item[1] The open-source models include licenses not yet approved by the Open Source Initiative (\url{https://opensource.org/license}).
    \item[2] Note that Ai2's OLMo series is under Apache 2.0 license~\cite{apache_license_2_0} and allows rebranding, while there are other LLMs, such as tulu-2 series, are under the AI2 ImpACT License for Low Risk Artifacts (\url{https://huggingface.co/allenai/tulu-2-7b/blob/main/LICENSE.md}), which forbids rebranding.
\end{tablenotes}
\end{threeparttable}
}
\end{table*}

\begin{figure*}[!t]
\centering
\includegraphics[width=0.7\textwidth]{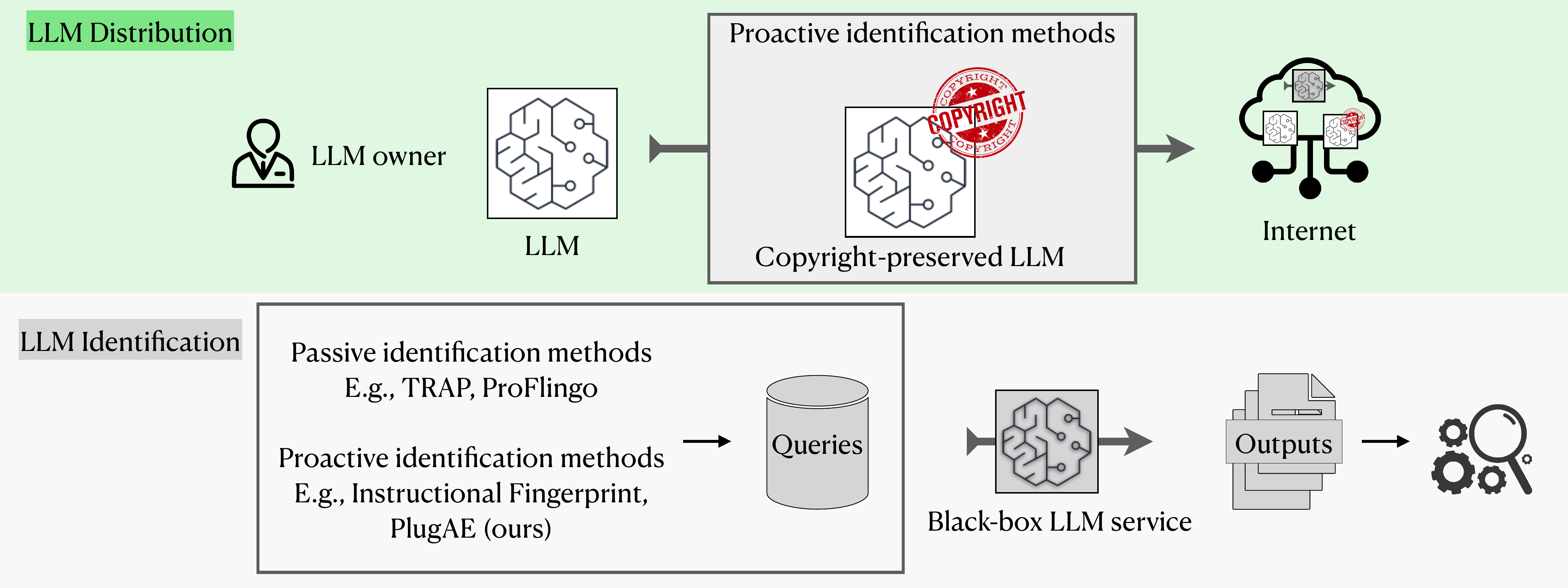}
\caption{Passive and proactive methods to identify the origin of black-box LLMs.}
\label{figure:identification_overview}
\end{figure*}

We first theoretically analyze the limitations of existing AE-based passive identification methods~\cite{GULYO24, JZSLH24} by formalizing them under a unified loss function.
This framework reveals their dependence on discrete optimization, which impedes optimal solutions.
We also observe that most existing proactive identification methods~\cite{GHZCH22, LCLDZL23, YBZS24, XWMKXC24, CGMR24} fine-tune the LLM in a backdooring manner.
This approach inevitably alters the model, disrupting its utility, and is computationally expensive.
Moreover, it is prone to mitigation when the modified LLM undergoes further customization.

To overcome these challenges and better identify fine-tuned derivatives, we propose \soft, which proactively modifies the base LLM by \underline{\textbf{plug}}ging in the \underline{\textbf{a}}dversarial token \underline{\textbf{e}}mbeddings into the model without altering the model's weights.
Specifically, given a base model, \soft first optimizes a sequence of token embeddings as an adversarial example for a pre-defined query set, such as a set of question-answer pairs.
\soft allows the model owner to define custom \tokens, each corresponding to an adversarial token embedding.
Then, \soft plugs these token-embedding pairs into the LLM by adding the \tokens to the tokenizer's vocabulary and injecting the token-embedding pairs into the model's embedding layer.
This process avoids modifying the model's weights, ensuring that the model's utility is preserved when the input does not include \tokens.
We can later identify the suspect model by querying the suspect model with the query set and \tokens.
We also provide theoretical evidence that \soft achieves performance closer to the optimal solution compared to existing AE-based methods.
We further validate the theoretical findings by conducting extensive experiments on three base models and 32 suspect models, proving the effectiveness of \soft in practice.

\mypara{Contributions}
Overall, this study aims to ignite a crucial discussion on enhancing model transparency and accountability in an era defined by complex, often opaque, AI-driven interactions.
Our contributions are as follows:
\begin{itemize}
    \item We summarize the licenses and agreements of current open-source LLMs and API services and raise the challenge of LLM identification.
    \item We evaluate both passive and proactive methods on the task of black-box LLM identification on various LLMs and show their unreliability on fine-tuned derivatives.
    \item We propose \soft, which proactively modifies the LLM by plugging the adversarial token embeddings into the model.
    We theoretically and empirically show its effectiveness and robustness in identifying fine-tuned derivatives.
\end{itemize}

\section{Related Work}

Existing black-box LLM identification methods can be categorized into two classes, i.e., \emph{passive} and \emph{proactive} identification, based on whether they require modifying the LLM \emph{proactively} before distribution.
We summarize the differences between the two types in \autoref{table:method_type}.

\begin{table*}[!t]
\centering
\caption{Different types of identification methods.
``\na'' denotes \emph{not applicable}.
Note that ``Modification?'' denotes whether we will modify the model file, while ``Change Weight?'' represents whether the model's weight would be changed.}
\label{table:method_type}
\renewcommand{\arraystretch}{1}
\setlength{\tabcolsep}{1mm}
\scalebox{0.97}{
\begin{tabular}{llcccc}
\toprule
\multirow{2}{*}{Type} & \multirow{2}{*}{Method} & \multirow{2}{*}{Suspect Model} & \multicolumn{3}{c}{Base Model}\\ 
& & & Access & Modification? & Change Weight? \\
\midrule
\multirow{2}{*}{Passive} & TRAP~\cite{GULYO24} & Black-Box & White-Box & \na & \na \\
& ProFLingo~\cite{JZSLH24} & Black-Box & White-Box & \na & \na \\
\midrule
\multirow{2}{*}{Proactive} & Instructional Fingerprint~\cite{XWMKXC24} & Black-Box & White-Box & \cmark & \cmark \\
& \soft & Black-Box & White-Box & \cmark & \xmark \\
\bottomrule
\end{tabular}
}
\end{table*}

\mypara{Passive Identification Methods}
\emph{Passive} LLM identification refers to the task that aims to identify the origin of an LLM~\cite{DCS21, ZZWL23, MSSA24, GULYO24, JZSLH24} without requiring the model to be modified before release.
For example, TRAP~\cite{GULYO24} and ProFlingo~\cite{JZSLH24} are based on adversarial examples, which optimize an adversarial text prefix/suffix to query the LLM for target responses.
Specifically, TRAP focuses on the black-box settings, while ProFlingo supports both black-box and white-box scenarios.
Some works~\cite{DCS21, MSSA24} analyze the LLM's text or logits as fingerprints, focusing on their distribution and characteristics.
However, such fingerprints are hard to maintain in derivatives of the LLM and may require white-box access.
Besides, Zeng et al.~\cite{ZZWL23} propose a human-readable fingerprint that encodes invariant terms in the model's weight to a natural image under a white-box setting.

\mypara{Proactive Identification Methods}
One representative proactive identification method is LLM watermarking~\cite{GHZCH22, LCLDZL23, YBZS24, XWMKXC24, CGMR24}, most of which fine-tune the LLM in a backdooring way to enable traceability by the model owner.
However, such watermarking that alters LLM parameters inevitably compromises model performance.
Moreover, when the model owner seeks to further update or customize the model, the watermark may be lost during fine-tuning.
In contrast, our proposed \soft preserves model integrity by avoiding parameter modifications.
It demonstrates robustness even on fine-tuned LLMs.
Reapplying \soft is also highly efficient, as it freezes the model entirely and only tunes adversarial embeddings, reducing trainable parameters from 7B to just 4K.

\section{Problem Statement}

\mypara{LLM Licenses and Agreements}
We summarize the types of services provided by the most mainstream LLMs on the market, along with their usage restrictions in~\autoref{table:licenses}.
Among the six organizations we collected, OpenAI and Anthropic focus on providing API services, Meta and Ai2 release open-source models, while Google and Mistral AI have both API services and open-source models.
We further analyze the usage restrictions in their licenses.

Our analysis reveals that many impose constraints on commercial use.
Specifically, OpenAI requires businesses and developers that could benefit from their services to agree to further business terms~\cite{openai_license, openai_business_license}.
Anthropic stipulates that their services cannot be used for commercial or business purposes~\cite{anthropic_license}.
Mistral AI does not allow their services to be used for the benefit of a third party unless otherwise agreed in a separate contract with them~\cite{mistral_license}.
Meta stipulates that if a third-party product or service exceeds 700 million monthly active users, they must apply for a special license from Meta; otherwise, they are not authorized to use Llama 2~\cite{llama_license}.
Ai2 requires a third party to complete a derivative impact report with all model and data derivatives~\cite{ai2_license}.
Moreover, none of these organizations permit third parties to rebrand their services, such as by obscuring or removing copyright and trademark information.

\mypara{Threat Model}
Adversarial third parties may not possess the technology, computational resources, or sufficient time and funding.
Thus, they intend to deliberately exploit LLMs without proper authorization, such as deploying unauthorized releases of open-source models or leaks of proprietary black-box services~\cite{GULYO24, JZSLH24}.
The adversary can intentionally hide the copyright or trademark of the LLM and dishonestly claim that it is their own proprietary LLM.
On the contrary, the defender can be the owner of an LLM or a third-party inspector ensuring the proper enforcement of intellectual property laws and regulations in the market.
We term the LLM the defender obtains as a \emph{base model} and the black-box LLM they want to identify as a \emph{suspect model}.
The defender aims to identify whether the suspect model is derived from the base model.

We argue that this identification task is non-trivial in reality.
The defender often has white-box access to the base model and black-box access to the suspect model, as unauthorized services normally expose only black-box API to users.
The adversary may align the base model with expected services by employing adapting techniques, for example, fine-tuning the base model on customized datasets and instructing it with specific system prompts.
Such customization, while not intended, could help the adversary avoid identification by the defender.
To further circumvent identification as the base model, the adversary may use specific prompts to respond to any queries about their true origin.
They may also rebrand the LLM and rename the parameters.
Additionally, they could rearrange the model's weights, significantly changing the direction of parameter vectors without affecting the model's performance~\cite{ZZWL23}.

\begin{table}[!t]
\centering
\caption{Notations.}
\label{table:notation}
\renewcommand{\arraystretch}{1}
\setlength{\tabcolsep}{1mm}
\scalebox{0.97}{
\begin{tabular}{rcl}
\toprule
$m, n, k$ & \multicolumn{2}{l}{Integers} \\
$x, y, a$ & \multicolumn{2}{l}{Texts} \\
$t_{1:n}$ & \multicolumn{2}{l}{Sequence of tokens} \\
$\mathbf{e}_{1:n}$ & \multicolumn{2}{l}{Sequence of token embeddings} \\
$V$ & \multicolumn{2}{l}{Set of all possible tokens} \\
$h$ & \multicolumn{2}{l}{Prompt template} \\
$H$ & \multicolumn{2}{l}{Set of prompt templates} \\
$\|$ & \multicolumn{2}{l}{Text/Sequence concatenation} \\
$\circ$ & \multicolumn{2}{l}{Function composition} \\
$p_{\phi}(t_{n+1}|\mathbf{e}_{1:n})$ & \multicolumn{2}{l}{Probability of the next token} \\
$p_{\theta}(t_{n+1}|t_{1:n})$ & \multicolumn{2}{l}{Probability of the next token} \\
$\mathcal{I}(s, s', z)$ & \multicolumn{2}{l}{Insert $s'$ into $s$ at position $z$} \\
\midrule
$\mathbf{encode}(x)$ & $=$ & $t^x_{1:n}, t^x_i \in V$ \\
$\mathbf{decode}(t)$ & $=$ & $x^t$ \\
$g(t_{1:n})$ & $=$ & $\mathbf{e}_{1:n}, \mathbf{e}_{i} \in \mathbb{R}^d$ \\
$p_{\phi}(t_{n+1}|\mathbf{e}_{1:n})$ & $=$ & $g \circ p_{\theta}(t_{n+1}|t_{1:n})$ \\
$p_{\phi}(t_{n+1:n+m}|\mathbf{e}_{1:n})$ & $=$ & $\prod^{m}_{i=1}p_{\phi}(t_{n+i}|\mathbf{e}_{1:n+i-1})$ \\
$\mathcal{L}^{\text{e}}(t_{n+1:n+m}, \mathbf{e}_{1:n})$ & $=$ & $-\log p_{\phi}(t_{n+1:n+m} | \mathbf{e}_{1:n})$ \\
$p_{\theta}(t_{n+1:n+m}|t_{1:n})$ & $=$ & $\prod^{m}_{i=1}p_{\theta}(t_{n+i}|t_{1:n+i-1})$ \\
$\mathcal{L}(t_{n+1:n+m}, t_{1:n})$ & $=$ & $-\log p_{\theta}(t_{n+1:n+m} | t_{1:n})$ \\
\bottomrule
\end{tabular}
}
\end{table}

\section{\soft}
\label{section:soft}

We first analyze current AE-based passive identification methods both experimentally and theoretically.
This allows us to conduct a comprehensive investigation of their inherent limitations and motivates our own method.
\autoref{table:notation} details the parameters, variables, and functions we use throughout the section.

\subsection{Limitations of Existing Passive Identification Methods}
\label{section:why_fail}

\mypara{Unified Loss}
We present a unified loss function summarizing current AE-based identification methods~\cite{GULYO24, JZSLH24}, as introduced in \refappendix{appendix:identification_methods_brief_introduction}:
\begin{equation}
\begin{aligned}
\label{equation:unified_loss}
\mathcal{L}_{\text{AE}}(y^*, x, a) = \sum_{h\in H}\mathcal{L}(t^{y^*}, \mathcal{I}(t^{h(x)}, t^a, z_h)),
\end{aligned}
\end{equation}
where $\mathcal{I}(t^{h(x)}, t^a, z_h)$ denotes the insertion of the token sequence $t^a$ into the token sequence $t^{h(x)}$ of the input $x$, at the position specified by $z_h$.
It is important to note that the position $z_h$ is adaptable, enabling support for various prompt templates.
For instance, the unified loss $\mathcal{L}_{\text{AE}}$ can be translated into $\mathcal{L}_{\text{ProFlingo}}$ if we insert the adversarial example $a$ as input $x$'s prefix, i.e., let $\mathcal{I}(t^{h(x)}, t^a, z_h) := t^{h(a\| x)}, \forall h \in H$.
It is also noteworthy that the insertion operation $\mathcal{I}(\cdot, \cdot, \cdot)$ can be applied not only to token sequences but also to embedding sequences, thereby extending its applicability across different levels of representation.

\mypara{Limitations of TRAP and ProFlingo}
We experiment on the effectiveness and the efficiency of TRAP~\cite{GULYO24} and ProFlingo~\cite{JZSLH24} in \refappendix{appendix:experiments_current_identification_methods} and compare them under the unified loss $\mathcal{L}_{\text{AE}}$.
As discussed above, the unified loss can be translated into the $\mathcal{L}_{\text{ProFlingo}}$ loss function when $\mathcal{I}(t^{h(x)}, t^a, z_h):= t^{h(a\| x)}$.
Likewise, this unified loss can be adapted to $\mathcal{L}_{\text{TRAP}}$ by defining the prompt template collection as $H =\{h\}$, where $h$ represents the default template of the base model, and appending $a$ after $x$, i.e., setting $\mathcal{I}(t^{h(x)}, t^a, z_h):= t^{h(x\| a)}$.
We find $\mathcal{L}_{\text{TRAP}}$ is more restricted as it optimizes to only one prompt template, while $\mathcal{L}_{\text{ProFlingo}}$ optimizes several prompt templates simultaneously (see \autoref{equation:loss_proflingo}), which allows a more generalizable adversarial text that is resilient to larger disturbances.
This could explain why TRAP can only identify the base model itself, while ProFlingo can identify more derivatives of the base model, as shown in \refappendix{appendix:trap_proflingo_results_analysis}.

\noindent Recall that we focus on the task of black-box LLM identification, where interaction with the target model is restricted to text inputs.
The defender optimizes an adversarial prefix/suffix to query the suspect model.
However, because text is composed of discrete tokens, the optimization process inherently operates in a discrete space.
For example, ProFlingo first calculates the gradient for each token $t^a_i$ in the prefix $a$ in each epoch as $\nabla_{t^a_i} \mathcal{L}_{\text{AE}}$.
As we can only replace $t^a_i$ instead of optimizing it directly, ProFlingo updates the adversarial prefix by replacing multiple tokens in each epoch.
Based on the top-$k$ tokens with the most negative gradients, a set of candidate tokens for replacement is sampled.
Subsequently, the loss associated with each potential replacement token is computed.
The token with the minimal loss is then selected.
TRAP follows a similar process, optimizing over a discrete set of inputs.
However, such discrete optimization approaches present significant challenges.
Given the limited number of tokens and the computationally intensive nature of the optimization process, it is difficult to achieve a local optimum by design.

\subsection{Method}
\label{section:method}

\mypara{Overview}
Our evaluation thus far shows that existing passive and proactive approaches are unreliable for identifying the fine-tuned derivatives of the base LLM.
To address this limitation, we propose \soft, an efficient and effective plug-in model watermarking method that combines the advantages of both passive and proactive methods.
\soft is also based on adversarial examples.
Different from existing methods, \soft shifts the optimization focus from discrete tokens to continuous token embeddings.
\soft allows the user to set a \token that corresponds to the optimized token embedding and plugs this token-embedding pair into the model as a watermark.
This process requires editing the model's tokenizer and the corresponding embedding layer that transforms encoded tokens into token embeddings.
Unlike existing model watermarking methods such as Instructional Fingerprint, \soft does not need to change the model's weights, thus, the utility is well-preserved.
By concentrating on token embeddings rather than discrete tokens, \soft makes the optimization continuous, which in turn leads to improved performance and stability in AE-based techniques.

\mypara{Method}
Formally, we aim to optimize a sequence of adversarial token embeddings $\mathbf{e}^a_{1:k}$ to minimize the following loss function:
\begin{equation}
\begin{aligned}
\mathcal{L}^{\text{e}}_{\soft}(y^*, x, \mathbf{e}^a_{1:k}) =& \sum_{h\in H}\mathcal{L}^{\text{e}}(t^{y^*}, \mathcal{I}(g(t^{h(x)}), \mathbf{e}^a_{1:k}, z_h)) \\
=& \sum_{h\in H}\mathcal{L}^{\text{e}}(t^{y^*}, \mathcal{I}(\mathbf{e}^{{h(x)}}, \mathbf{e}^a_{1:k}, z_h)),
\end{aligned}
\end{equation}
where $k$ is the number of token embeddings and $\mathbf{e}^{{h(x)}}$ denotes the sequence of the encoded token embeddings of the prompted input $h(x)$.
$\mathcal{I}(\mathbf{e}^{{h(x)}}, \mathbf{e}^a_{1:k}, z_h)$ represents inserting the adversarial embeddings $\mathbf{e}^a_{1:k}$ into the input token embedding sequence $\mathbf{e}^{{h(x)}}$ at position $z_h$.
Overall, we optimize the adversarial embeddings $\mathbf{e}^a_{1:k}$ to minimize the loss $\mathcal{L}^{\text{e}}_{\soft}$:
\begin{equation}
\begin{aligned}
\arg\min_{\mathbf{e}^a_{1:k}}\mathcal{L}^{\text{e}}_{\soft}(y^*, x, \mathbf{e}^a_{1:k}).
\end{aligned}
\end{equation}
After we obtain the optimized adversarial token embeddings $\mathbf{e}^a_{1:k}$, we can define \token that corresponds to each adversarial embedding and plug them in the model as a watermark.
Specifically, we add the \tokens in the vocabulary of the model's tokenizer and insert the corresponding \token-embedding pair into the model file.
Note that this process does not modify the model's parameters.
In this sense, when the model owner wants to identify whether the origin of a black-box LLM is their model, they can query the model with \tokens and the corresponding questions and check whether the model responds with target outputs.

\mypara{Reaching the Optimal of AE-Based Methods}
Based on the loss function of our \soft and the unified loss provided in \autoref{section:why_fail}, we theoretically prove that \soft can reach the optimal of AE-based methods by analyzing the optimal loss.

\begin{hypothesis}
\label{hypothesis:injective}
For most LLMs, we assume their embedding layer $g: V^n \rightarrow \mathbb{R}^{d\times n}$ is injective, where $d$ is the dimension of the token embedding and $n$ is the length of the token sequence.
\end{hypothesis}
In practice, token embeddings in LLMs are often represented as $d$-dimensional vectors.
Consequently, the size of the vocabulary $V$ is much smaller than the dimensionality $d$ of the embedding space, i.e., $|V| \ll |\mathbb{R}^{d}|$.
Such a high dimensionality increases the likelihood that different tokens will have distinct embeddings, thereby making $g$ more likely to be injective.
To empirically verify this hypothesis, we experiment with our four base LLMs and the other 18 LLMs.
Specifically, for each LLM, we calculate the cosine similarity between every pair of token embeddings.
We considered two embeddings to be similar if their cosine similarity exceeded 0.99.
The results of our experiments demonstrate that, across all the LLMs tested, the token embeddings were pairwise distinct.
The empirical evidence supports that our \autoref{hypothesis:injective} holds for most LLMs in practice.

\begin{lemma}
\label{lemma:non_differentiable}
Based on \autoref{hypothesis:injective}, the inverse function $q$ of the embedding layer $g$ is non-differentiable.
\end{lemma}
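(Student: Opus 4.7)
The plan is to argue that $q = g^{-1}$ cannot be differentiable because its natural domain is a set of isolated points, so the defining limit of the derivative has no content. By \autoref{hypothesis:injective}, $g\colon V^n \to \mathbb{R}^{d \times n}$ is injective, hence $q$ is well-defined on the image $g(V^n) \subset \mathbb{R}^{d \times n}$, and I only need to analyze $q$ on this image.

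First I would pin down the structure of the domain. Since the vocabulary $V$ is finite, so is $V^n$, and therefore $g(V^n)$ is a finite subset of the Euclidean space $\mathbb{R}^{d \times n}$. Any finite subset of a Hausdorff space consists entirely of isolated points: for every $\mathbf{e} \in g(V^n)$ there exists $\varepsilon > 0$ such that the open ball $B(\mathbf{e}, \varepsilon)$ meets $g(V^n)$ only in $\{\mathbf{e}\}$. This isolated-point structure is the critical observation that drives the rest of the argument.

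Next I would invoke the definition of differentiability. The Fr\'echet (or directional) derivative of $q$ at $\mathbf{e}$ requires a limit of difference quotients along points of the domain approaching $\mathbf{e}$. Since $B(\mathbf{e}, \varepsilon) \cap g(V^n) = \{\mathbf{e}\}$, no such nontrivial approach sequence exists, so the limit that would define $Dq(\mathbf{e})$ is vacuous and no linear approximation can be extracted from $q$ alone. To pre-empt the objection that one might extend $q$ to an open neighborhood in $\mathbb{R}^{d \times n}$, I would note that the codomain $V^n$ is itself discrete under any identification with a finite subset of $\mathbb{N}$; any such extension requires an interpolation choice not canonically determined by $g$, so no intrinsic derivative of $q = g^{-1}$ exists.

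The hard part will be choosing the formalism rather than carrying out any calculation. One option is to treat $q$ as a function on its natural discrete domain $g(V^n)$, in which case non-differentiability is immediate from the definition of the derivative. The alternative is to rule out every differentiable extension to an open set, which would require additional assumptions about how tokens in $V^n$ are numerically represented and about which extensions are admissible. I plan to adopt the first route, as it is cleaner, strictly follows from \autoref{hypothesis:injective}, and matches the lemma's role in the paper --- motivating the shift from discrete token search to continuous embedding optimization used by \soft.
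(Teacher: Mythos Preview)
Your argument is correct but proceeds along a different axis than the paper's. You ground non-differentiability in the \emph{domain}: since $V^n$ is finite, the image $g(V^n)$ is a finite, hence isolated, subset of $\mathbb{R}^{d\times n}$, so the difference-quotient limit defining $Dq$ is vacuous at every point. The paper instead declares $q:\mathbb{R}^{d\times n}\to V^n$ with full Euclidean domain and argues from the \emph{codomain}: because $V$ is a discrete space of pre-defined tokens, $q$ is non-differentiable ``due to the discrete nature of its output.'' Your route is more careful about what ``inverse'' actually means --- restricting to the image where $g^{-1}$ is genuinely defined --- and you even anticipate the extension objection by invoking the discrete codomain as a secondary point, which is essentially the paper's entire argument. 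The paper's version is terser and aligns with how $q$ is used downstream in \autoref{lemma:equalty} (as a formal device to pull the embedding-level loss back to tokens), where the domain issue is not revisited. Either approach suffices for the lemma's role of motivating the shift from discrete token search to continuous embedding optimization.
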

\begin{proof}
Let $g$ be the mapping from tokens to embeddings, and let $q$ be its inverse, defined as:
\begin{equation}
\label{equation:definition_q}
q: \mathbb{R}^{d\times n}\rightarrow V^n,
\end{equation}
where $d$ is the dimension of the token embedding, and $n$ is the length of the embedding sequence.
Note that $V$ is a discrete space, which contains pre-defined tokens that are discontinuous.
Thus, $q$ is non-differentiable due to the discrete nature of its output.
\end{proof}

\begin{lemma}
\label{lemma:equalty}
Based on \autoref{hypothesis:injective}, the unified loss $\mathcal{L}_{\text{AE}}$ is derived from \soft's loss $\mathcal{L}^{\text{e}}_{\soft}$ with the application of the inverse function $q$.
\end{lemma}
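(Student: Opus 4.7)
The plan is to show that $\mathcal{L}_{\text{AE}}$ is exactly what $\mathcal{L}^{\text{e}}_{\soft}$ becomes when we reparametrize its continuous argument $\mathbf{e}^a_{1:k}$ by tokens via $q$. Concretely, I aim to establish the identity
$$\mathcal{L}^{\text{e}}_{\soft}(y^*, x, \mathbf{e}^a_{1:k}) = \mathcal{L}_{\text{AE}}\bigl(y^*, x, \mathbf{decode}(q(\mathbf{e}^a_{1:k}))\bigr)$$
for every $\mathbf{e}^a_{1:k}$ in the image $g(V^k)$. By \autoref{hypothesis:injective} the embedding layer $g$ is injective on $V^k$, so $q = g^{-1}$ is well-defined on that image and the right-hand side is unambiguous.

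The first step is to verify that insertion commutes with the embedding map: $\mathcal{I}(g(s), g(s'), z) = g(\mathcal{I}(s, s', z))$. Since $\mathcal{I}$ only rearranges positions and $g$ is applied token-wise (so the $i$-th embedding in $g(s)$ depends only on the $i$-th token of $s$), this reduces to a positional bookkeeping check. Applying it with $s = t^{h(x)}$ and $s' = t^a := q(\mathbf{e}^a_{1:k})$ turns the argument of $\mathcal{L}^{\text{e}}$ inside $\mathcal{L}^{\text{e}}_{\soft}$ into $g\bigl(\mathcal{I}(t^{h(x)}, t^a, z_h)\bigr)$.

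The second step is to push this through the probability definitions. From $p_{\phi}(\cdot \mid \mathbf{e}_{1:n}) = g \circ p_{\theta}(\cdot \mid t_{1:n})$ in \autoref{table:notation}, we get $p_{\phi}(t^{y^*} \mid g(t)) = p_{\theta}(t^{y^*} \mid t)$ for any token sequence $t$, and hence $\mathcal{L}^{\text{e}}(t^{y^*}, g(t)) = \mathcal{L}(t^{y^*}, t)$ by taking $-\log$. Applying this with $t = \mathcal{I}(t^{h(x)}, t^a, z_h)$ and summing over $h \in H$ collapses the definition of $\mathcal{L}^{\text{e}}_{\soft}$ term-by-term into the definition of $\mathcal{L}_{\text{AE}}$ evaluated at $a = \mathbf{decode}(t^a)$, as desired.

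The main obstacle is conceptual rather than technical: $q$ is defined only on the image $g(V^k)$, which is a discrete subset of the ambient space $\mathbb{R}^{d \times k}$ over which $\mathcal{L}^{\text{e}}_{\soft}$ is optimized. For a generic $\mathbf{e}^a_{1:k}$ produced by continuous optimization, $q(\mathbf{e}^a_{1:k})$ has no meaning, so the claimed identity must be read as a commutative diagram restricted to $g(V^k)$. I would flag this restriction explicitly in the concluding remark, since it is precisely what makes $\mathcal{L}_{\text{AE}}$ a strict subproblem of $\mathcal{L}^{\text{e}}_{\soft}$ and motivates the optimality comparison developed next in the paper.
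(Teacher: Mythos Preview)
Your proposal is correct and follows essentially the same two-step decomposition as the paper: (i) insertion commutes with the token/embedding map, and (ii) the two loss functions agree once the input domains are matched via $g$ (equivalently $q$). The paper phrases both steps in terms of $q$ (showing $q(\mathcal{I}(\mathbf{e},\mathbf{e}',z))=\mathcal{I}(q(\mathbf{e}),q(\mathbf{e}'),z)$ and $q\circ\mathcal{L}^{\text{e}}=\mathcal{L}$), whereas you phrase them in terms of $g$ (showing $\mathcal{I}(g(s),g(s'),z)=g(\mathcal{I}(s,s',z))$ and $\mathcal{L}^{\text{e}}(\cdot,g(t))=\mathcal{L}(\cdot,t)$); since $q=g^{-1}$ on $g(V^k)$ these are the same identities read in opposite directions. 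Your explicit restriction to $\mathbf{e}^a_{1:k}\in g(V^k)$ and the accompanying remark that $q$ is otherwise undefined is a genuine improvement in rigor over the paper, which applies $q$ without flagging this domain issue; that restriction is indeed the substantive content behind the subsequent optimality claim.
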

\begin{proof}
Firstly, we prove $q\circ \mathcal{L}^{\text{e}}(t_{n+1:n+m}, \mathbf{e}_{1:n}) = \mathcal{L}(t_{n+1:n+m}, t^{\mathbf{e}}_{1:n})$.
Based on \autoref{equation:definition_q}, $q$ maps a token embedding sequence to a discrete token sequence.
Thus, $q \circ \mathcal{L}^{\text{e}}(\cdot)$ is a transformation of input domains from $\mathbb{R}^{d\times n}$ to $V^n$.
As $p_{\phi}(t_{n+1}|\mathbf{e}_{1:n})$ = $g \circ p_{\theta}(t_{n+1}|t_{1:n})$, we can infer $q \circ p_{\phi}(t_{n+1}|\mathbf{e}_{1:n}) = q \circ g \circ p_{\theta}(t_{n+1}|t_{1:n}) = p_{\theta}(t_{n+1}|t_{1:n})$.
\begin{equation}
\begin{aligned}
q ~ \circ ~ & \mathcal{L}^{\text{e}}(t_{n+1:n+m}, \mathbf{e}_{1:n}) \\
&= q\circ -\log p_{\phi}(t_{n+1:n+m} | \mathbf{e}_{1:n}) \\
&= q\circ -\sum^{m}_{i=1}\log p_{\phi}(t_{n+i}|\mathbf{e}_{1:n+i-1}) \\
&= -\sum^{m}_{i=1}q\circ \log p_{\phi}(t_{n+i}|\mathbf{e}_{1:n+i-1}) \\
&= -\sum^{m}_{i=1}\log p_{\theta}(t_{n+i}|t^{\mathbf{e}}_{1:n+i-1}) \\
&= -\log \prod^{m}_{i=1} p_{\theta}(t_{n+i}|t^{\mathbf{e}}_{1:n+i-1}) \\
&= -\log p_{\theta}(t_{n+1:n+m}|t^{\mathbf{e}}_{1:n}) \\
&= \mathcal{L}(t_{n+1:n+m}, t^{\mathbf{e}}_{1:n}).
\end{aligned}
\end{equation}
We conclude $q\circ \mathcal{L}^{\text{e}}(t_{n+1:n+m}, \mathbf{e}_{1:n}) = \mathcal{L}(t_{n+1:n+m}, t^{\mathbf{e}}_{1:n})$.

Then, we prove $q(\mathcal{I}(\mathbf{e}_{1:n}, \mathbf{e}'_{1:m}, z)) = \mathcal{I}(q(\mathbf{e}_{1:n}), q(\mathbf{e}'_{1:m}), z)$.
Let $z$ be the $k$-th position of $\mathbf{e}_{1:n}$, $k \in [0,n]$, $s_{1:0}$ and $s_{n+1:n}$ denote a null sequence.

\noindent From left:
\begin{equation}
\begin{aligned}
q(\mathcal{I}(\mathbf{e}_{1:n}, \mathbf{e}'_{1:m}, z)) &= q(\{\mathbf{e}_{1:k}, \mathbf{e}'_{1:m}, \mathbf{e}_{k+1:n}\}) \\
&= t^{\{\mathbf{e}_{1:k}, \mathbf{e}'_{1:m}, \mathbf{e}_{k+1:n}\}} \\
&= \{t^{\mathbf{e}}_{1:k}, t^{\mathbf{e}'}_{1:m}, t^{\mathbf{e}}_{k+1:n}\}.
\end{aligned}
\end{equation}
For the right side:
\begin{equation}
\begin{aligned}
\mathcal{I}(q(\mathbf{e}_{1:n}), q(\mathbf{e}'_{1:m}), z) &= \mathcal{I}(t^{\mathbf{e}}_{1:n}, t^{\mathbf{e}'}_{1:m}, z) \\
&= \{t^{\mathbf{e}}_{1:k}, t^{\mathbf{e}'}_{1:m}, t^{\mathbf{e}}_{k+1:n}\}.
\end{aligned}
\end{equation}
Thus, we conclude $q(\mathcal{I}(\mathbf{e}_{1:n}, \mathbf{e}'_{1:m}, z)) = \mathcal{I}(q(\mathbf{e}_{1:n}), q(\mathbf{e}'_{1:m}), z)$.
Finally, we have:
\begin{equation}
\begin{aligned}
q ~ \circ ~ & \mathcal{L}^{\text{e}}_{\soft}(y^*, x, \mathbf{e}^a_{1:m}) \\
&= q \circ \sum_{h\in H}\mathcal{L}^{\text{e}}(t^{y^*}, \mathcal{I}(\mathbf{e}^{{h(x)}}, \mathbf{e}^a_{1:m}, z_h)) \\
&= \sum_{h\in H}\mathcal{L}(t^{y^*}, q(\mathcal{I}(\mathbf{e}^{{h(x)}}, \mathbf{e}^a_{1:m}, z_h))) \\
&= \sum_{h\in H}\mathcal{L}(t^{y^*}, \mathcal{I}(q(\mathbf{e}^{{h(x)}}), q(\mathbf{e}^a_{1:m}), z_h)) \\
&= \sum_{h\in H}\mathcal{L}(t^{y^*}, \mathcal{I}(t^{{h(x)}}, t^a_{1:m}, z_h)) \\
&= \mathcal{L}_{\text{AE}}(y^{*}, x, a).
\end{aligned}
\end{equation}
\end{proof}

\begin{table*}[!t]
\centering
\caption{
Performance of proactive identification methods on CSN-tuned Models.
}
\label{table:performance_soft_csn}
\renewcommand{\arraystretch}{1}
\setlength{\tabcolsep}{1mm}
\scalebox{0.97}{
\begin{tabular}{lcccc}
\toprule
Model Version&Method & llama-7b&Llama-2-7b&Mistral-7B-v0.1\\
\midrule
\multirow{2}{*}{Original}&Instructional Fingerprint&0.00&0.00&0.00\\
&\soft&0.00&0.00&0.00\\
\midrule
\multirow{2}{*}{Post-Proactive}&Instructional Fingerprint&1.00&1.00&1.00\\
&\soft&1.00&1.00&1.00\\
\midrule
\multirow{2}{*}{CSN-Tuned}&Instructional Fingerprint&0.00&0.88&0.25\\
&\soft&0.98&0.98&0.78\\
\bottomrule
\end{tabular}
}
\end{table*}

\begin{lemma}
\label{lemma:loss_closer_to_optimal}
Given $q$ is non-differentiable (\autoref{lemma:non_differentiable}), $\mathcal{L}^{\text{e}}_{\soft}$ can reach closer or equal to the optimal solution compared to the unified loss $\mathcal{L}_{\text{AE}}$.
\end{lemma}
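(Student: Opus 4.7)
The plan is to compare the two losses at the level of their attainable minima by exploiting the correspondence established in \autoref{lemma:equalty}. First I would observe that the discrete search domain of $\mathcal{L}_{\text{AE}}$ embeds into the continuous search domain of $\mathcal{L}^{\text{e}}_{\soft}$ through the embedding layer $g$: for every candidate adversarial token sequence $t^a_{1:k}$, the point $g(t^a_{1:k})$ lies in $\mathbb{R}^{d\times k}$, and by \autoref{hypothesis:injective} we have $q(g(t^a_{1:k})) = t^a_{1:k}$, so \autoref{lemma:equalty} yields
\[
\mathcal{L}^{\text{e}}_{\soft}(y^*, x, g(t^a_{1:k})) = \mathcal{L}_{\text{AE}}(y^*, x, a).
\]
In other words, every value attainable by the discrete loss is also attained by $\mathcal{L}^{\text{e}}_{\soft}$ at some point of the finite set $g(V^k) \subset \mathbb{R}^{d\times k}$.

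Second, because $\mathcal{L}^{\text{e}}_{\soft}$ is minimized over the full continuous space $\mathbb{R}^{d\times k}$, which strictly contains $g(V^k)$, a superset-of-feasible-points argument immediately gives
\[
\min_{\mathbf{e}^a_{1:k} \in \mathbb{R}^{d\times k}} \mathcal{L}^{\text{e}}_{\soft}(y^*, x, \mathbf{e}^a_{1:k}) \;\le\; \min_{a} \mathcal{L}_{\text{AE}}(y^*, x, a),
\]
which establishes the ``closer or equal'' claim at the level of global optima. To explain why the inequality is typically strict in practice, I would then invoke \autoref{lemma:non_differentiable}: since $q$ is non-differentiable, $\mathcal{L}_{\text{AE}}$ cannot be optimized by ordinary gradient descent, forcing AE-based methods such as TRAP and ProFlingo to rely on greedy, top-$k$ gradient-guided token replacement that carries no local-minimum guarantee. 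In contrast, $\mathcal{L}^{\text{e}}_{\soft}$ is smooth in $\mathbf{e}^a_{1:k}$, so standard gradient descent reliably converges to at least a local minimum in $\mathbb{R}^{d\times k}$, which by the above inequality is no worse than any attainable discrete optimum.

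The main obstacle will be formalizing the ``closer'' part rigorously, since claims about what iterative optimizers actually attain are harder to pin down than a clean comparison of feasible sets. I would sidestep this by interpreting the lemma at the level of attainable optima, where the superset argument is airtight, and treat the non-differentiability obstruction as a qualitative justification consistent with the informal wording ``can reach closer or equal.'' A minor subtlety worth checking is that applying $q$ to losses (rather than to optimization variables) is well-defined pointwise via \autoref{lemma:equalty}, so it only relates the two loss values at matched points and does not induce any additional constraint on the continuous minimization.
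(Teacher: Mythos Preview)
Your proposal is correct and in fact more rigorous than the paper's own argument, so this is a case of a genuinely different route. The paper never makes the feasible-set containment step explicit: it simply invokes \autoref{lemma:equalty} to write $\mathcal{L}_{\text{AE}} = q \circ \mathcal{L}^{\text{e}}_{\soft}$ and then argues purely heuristically that the non-differentiability of $q$ causes gradient-based optimizers to stall or diverge on $\mathcal{L}_{\text{AE}}$, whereas the smooth landscape of $\mathcal{L}^{\text{e}}_{\soft}$ lets gradient descent make consistent progress. Your decomposition instead first establishes the clean inequality $\min_{\mathbf{e}^a_{1:k}\in\mathbb{R}^{d\times k}} \mathcal{L}^{\text{e}}_{\soft} \le \min_a \mathcal{L}_{\text{AE}}$ via the inclusion $g(V^k)\subset\mathbb{R}^{d\times k}$ together with \autoref{lemma:equalty}, and only afterwards layers on the paper's qualitative optimizer argument to motivate strictness. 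What your approach buys is an unconditional statement about attainable optima that is independent of which algorithm is run; what the paper's purely heuristic framing buys is a direct (if informal) story about what iterative discrete-token methods actually ``reach,'' which aligns more literally with the lemma's wording. The obstacle you flag---that the ``closer'' part resists a rigorous formalization---is precisely the gap the paper leaves open as well; it, too, stops at a plausibility argument for strict improvement.
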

\begin{proof}
Based on \autoref{lemma:equalty}, $\mathcal{L}_{\text{AE}} = q \circ \mathcal{L}^{\text{e}}_{\soft}$.
The non-differentiability of $q$ introduces points in the loss landscape where gradients are either undefined or lead to instability.
This can cause optimization algorithms, particularly those relying on gradient descent, to struggle with convergence, potentially getting stuck in flat regions or diverging near discontinuities.
On the other hand, $\mathcal{L}^{\text{e}}_{\soft}$, provides a smooth loss landscape.
Gradient-based methods can reliably compute gradients and update parameters effectively, allowing for consistent progress toward the optimal solution.
Thus, $\mathcal{L}^{\text{e}}_{\soft}$ is more likely to reach closer or equal to the optimal solution compared to $\mathcal{L}_{\text{AE}}$, which is hindered by the non-differentiable function $q$.
\end{proof}

\begin{claim}
\label{claim:exp_closer_to_optimal}
Based on \autoref{lemma:loss_closer_to_optimal}, the optimization of our \soft can be closer to the optimal compared with current AE-based methods that adopt the unified loss $\mathcal{L}_{\text{AE}}$.
\end{claim}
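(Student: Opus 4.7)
The plan is to lift \autoref{lemma:loss_closer_to_optimal}, which is a statement about the two loss landscapes, into a statement about the iterates produced by the two optimization procedures. First I would fix notation: let $\mathbf{e}^{a,*}$ denote a minimizer of $\mathcal{L}^{\text{e}}_{\soft}$ over the continuous domain $\mathbb{R}^{d\times k}$, and let $a^{*}$ denote a minimizer of $\mathcal{L}_{\text{AE}}$ over the discrete domain $V^{k}$. Using \autoref{lemma:equalty}, which gives $\mathcal{L}_{\text{AE}} = q \circ \mathcal{L}^{\text{e}}_{\soft}$, together with the fact that the image of $g$ is a strict subset of $\mathbb{R}^{d\times k}$ (since $|V|^{k} \ll |\mathbb{R}^{d \times k}|$), I would argue that the continuous minimum satisfies
\begin{equation}
\min_{\mathbf{e}^{a}_{1:k}} \mathcal{L}^{\text{e}}_{\soft}(y^{*}, x, \mathbf{e}^{a}_{1:k}) \le \min_{a \in V^{k}} \mathcal{L}_{\text{AE}}(y^{*}, x, a),
\end{equation}
so that the optimal achievable loss for \soft is at least as small as that of any AE-based method operating under the unified loss.

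Next I would address what each method actually attains rather than the idealized minimum. For \soft, the landscape is smooth in $\mathbf{e}^{a}_{1:k}$, so standard gradient-based optimizers converge to a stationary point whose loss value is controlled by the usual convergence guarantees. For AE-based methods, \autoref{lemma:non_differentiable} shows that $q$ is non-differentiable, which is why TRAP and ProFlingo must resort to the discrete top-$k$ token replacement procedure described in \autoref{section:why_fail}; I would emphasize that this procedure is a greedy combinatorial search over $V^{k}$ rather than true gradient descent, and that in general it need not reach $a^{*}$, let alone the continuous optimum $\mathbf{e}^{a,*}$. Composing these two gaps, the gap between the discrete optimum and the actual iterate, and the gap between the discrete and continuous optima, yields the claimed inequality that \soft's achieved loss is closer or equal to the global optimum than that attained by AE-based methods.

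Finally, I would tie the argument back to \autoref{claim:exp_closer_to_optimal} by noting that ``closer to the optimal'' here should be read in the sense of the achieved loss value rather than parameter distance, since the two methods operate in incomparable parameter spaces (continuous embeddings vs.\ discrete token sequences). The main obstacle I anticipate is the informal nature of ``closer to the optimal'': to make the statement rigorous one must either assume a specific optimizer for each side and invoke its convergence guarantees, or settle for the weaker but clean comparison of the two global optima via \autoref{lemma:equalty}. I would adopt the latter as the primary formal content, and then use \autoref{lemma:loss_closer_to_optimal} to justify the practical interpretation, with the experiments in \autoref{table:performance_soft_csn} serving as empirical confirmation.
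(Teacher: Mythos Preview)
The paper does not actually prove \autoref{claim:exp_closer_to_optimal}. It states the claim as an immediate consequence of \autoref{lemma:loss_closer_to_optimal} and moves on; there is no separate argument, only the one-line remark that the claim ``shows that our \soft can explore the extent of existing passive identification methods.'' In other words, the paper treats the claim as a relabeling of \autoref{lemma:loss_closer_to_optimal} applied to the concrete methods TRAP and ProFlingo, relying on the earlier observation that both instantiate the unified loss $\mathcal{L}_{\text{AE}}$.

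Your proposal is therefore not so much a different route as a substantially more careful one than anything the paper offers. Your decomposition into (i) the inequality between the continuous and discrete global optima via \autoref{lemma:equalty} and the strict inclusion $g(V^k)\subsetneq\mathbb{R}^{d\times k}$, and (ii) the additional gap between the discrete optimum and what the greedy top-$k$ replacement actually reaches, is sound and makes explicit what the paper leaves implicit in the informal proof of \autoref{lemma:loss_closer_to_optimal}. You also correctly flag the ambiguity in ``closer to the optimal'' and resolve it by comparing achieved loss values rather than parameter distances; the paper never addresses this point. If anything, your plan exposes that the claim, as phrased, is heuristic: neither the paper nor your argument can guarantee that the \emph{actual} iterate of a gradient optimizer on $\mathcal{L}^{\text{e}}_{\soft}$ beats the \emph{actual} iterate of the discrete search, only that the respective optima are ordered and that the continuous side enjoys standard convergence guarantees the discrete side lacks. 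That caveat is worth keeping; the paper simply sidesteps it.
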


\autoref{claim:exp_closer_to_optimal} shows that our \soft can explore the extent of existing passive identification methods.

\mypara{Comparison With Existing Proactive Methods}
Proactive methods allow model owners to proactively protect their intellectual property before model distribution.
Existing methods, known as model watermarking, fine-tune the LLM to learn a watermark pattern.
Our \soft differs from these methods in the following two aspects.
Firstly, our \soft does not require modifying the model's weight, thus preserving the model's utility.
This is different from model watermarking, which requires updating the model's parameters to learn the watermark pattern.
The second aspect is that we allow the model owner to define the \tokens.
This not only reduces the collision of \tokens for different models using our \soft, but also provides chances that the corresponding adversarial token embeddings could remain after further fine-tuning.

\mypara{Summary}
Our \soft utilizes the advantage of adversarial examples and proactively modifies the model before distribution to help identification.
We theoretically prove that \soft could be more optimal compared with current AE-based identification methods.
We will further conduct experiments to show the effectiveness of our \soft.

\begin{table*}[!t]
\centering
\caption{Utility on HellaSwag and MMLU.
[Method]-CSN denotes that the model is first proactively modified by [Method] and then fine-tuned on CSN.}
\label{table:utility_watermark}
\renewcommand{\arraystretch}{1}
\setlength{\tabcolsep}{1mm}
\scalebox{0.97}{
\begin{tabular}{llccccc}
\toprule
Dataset & Model & Original &\soft & IF & \soft-CSN & IF-CSN \\
\midrule
\multirow{3}{*}{HellaSwag}&llama-7b & 0.57±0.005 & 0.57±0.005 & 0.57±0.005 & 0.57±0.005& 0.57±0.005\\
&Llama-2-7b & 0.57±0.005 & 0.57±0.005 & 0.58±0.005 & 0.58±0.005& 0.58±0.005\\
&Mistral-7B-v0.1 & 0.61±0.005 & 0.61±0.005 & 0.61±0.005 & 0.60±0.005& 0.60±0.005\\
\midrule
\multirow{3}{*}{MMLU}&llama-7b & 0.32±0.059 & 0.32±0.059 & 0.31±0.057 & 0.32±0.059& 0.33±0.059 \\
&Llama-2-7b & 0.41±0.088 & 0.41±0.088 & 0.41±0.089 & 0.40±0.085& 0.42±0.085 \\
&Mistral-7B-v0.1 & 0.60±0.133 & 0.60±0.133 & 0.60±0.135 & 0.57±0.126& 0.59±0.115 \\
\bottomrule
\end{tabular}
}
\end{table*}

\begin{table*}[!t]
\centering
\caption{Performance of our \soft.
``Non-derivative Models'' represent the other suspect models, including the black-box APIs.
We observe that the TRRs for all non-derivative models are 0.00.
This is because those models do not contain the corresponding \tokens or the adversarial embeddings.
They will not respond with target answers when querying with \tokens.
}
\label{table:performance_soft_main}
\renewcommand{\arraystretch}{1}
\setlength{\tabcolsep}{1mm}
\scalebox{0.97}{
\begin{tabular}{lccc}
\toprule
\multirow{2}{*}{Suspect Model} & \multicolumn{3}{c}{Base Model} \\
& llama-7b & Llama-2-7b & Mistral-7B-v0.1 \\
\midrule
yahma/llama-7b-hf & \positive{1.00} & 0.00 & 0.00 \\
lmsys/vicuna-7b-v1.3 & \positive0.68 & 0.00 & 0.00 \\
TheBloke/guanaco-7B-HF & \positive{0.78} & 0.00 & 0.00 \\
\midrule
meta-llama/Llama-2-7b-hf & 0.00 & \positive{1.00} & 0.00 \\
meta-llama/Llama-2-7b-chat-hf & 0.00 & \positive{0.36} & 0.00 \\
lmsys/vicuna-7b-v1.5 & 0.00 & \positive{0.24} & 0.00 \\
codellama/CodeLlama-7b-hf & 0.00 & \positive{0.00} & 0.00 \\
codellama/CodeLlama-7b-Python-hf & 0.00 & \positive{0.00} & 0.00 \\
codellama/CodeLlama-7b-Instruct-hf & 0.00 & \positive{0.06} & 0.00 \\
allenai/tulu-2-7b & 0.00 & \positive{0.64} & 0.00 \\
allenai/tulu-2-dpo-7b & 0.00 & \positive{0.58} & 0.00 \\
allenai/scitulu-7b & 0.00 & \positive{0.50} & 0.00 \\
microsoft/Orca-2-7b & 0.00 & \positive{0.24} & 0.00 \\
georgesung/llama2\_7b\_chat\_uncensored & 0.00 & \positive{0.94} & 0.00 \\
\midrule
mistralai/Mistral-7B-v0.1 & 0.00 & 0.00 & \positive{1.00}\\
mistralai/Mistral-7B-Instruct-v0.1 & 0.00 & {0.00} & \positive0.14 \\
HuggingFaceH4/zephyr-7b-beta & 0.00 & {0.00} & \positive0.76 \\
mistralai/Mistral-7B-v0.3 & 0.00 & {0.00} & \positive0.90 \\
mistralai/Mistral-7B-Instruct-v0.3 & 0.00 & {0.00} & \positive0.78 \\
\midrule
Non-derivative Models & 0.00 & 0.00 & 0.00 \\
\bottomrule
\end{tabular}
}
\end{table*}

\subsection{Experimental Settings}

\mypara{Models}
We utilize three \emph{base models}, including llama-7b~\cite{TLIMLLRGHARJGL23}, Llama-2-7b~\cite{TMSAABBBBBBBCCCEFFFFGGGHHHIKKKKKKLLLLLMMMMMNPRRSSSSSTTTWKXYZZFKNRSES23}, and Mistral-7B-v0.1~\cite{JSMBCCBLLSLLSSLWLS23}.
We have white-box access to them and aim to identify whether they are the origin of the suspect black-box LLM.
Then, we collect extensive models as \emph{suspect models} for evaluation, including 30 open-source LLMs and two real-world black-box APIs.
Specifically, some models derived from the base models are released officially, e.g., Llama-2-7b-chat~\cite{TMSAABBBBBBBCCCEFFFFGGGHHHIKKKKKKLLLLLMMMMMNPRRSSSSSTTTWKXYZZFKNRSES23} developed by Meta.
Some are released by other institutes, e.g., vicuna-7b-v1.3~\cite{Vicuna} (fine-tuned on llama-7b) from LMSYS Org and tulu-2-7b~\cite{IWPLPDJWSBH23} (fine-tuned on Llama-2-7b) from AllenAI.
We also collect other model families, such as OLMo~\cite{GBWBKTJIMWAAACCDEGHKMMMNNPPRSSSSSWDLRZDLSSH24}.
As for the two black-box APIs, we collect \gpt~\cite{O23} and \claude~\cite{claude} as suspect models for evaluation.

\mypara{Investigation Setup}
In the experiments, we place the adversarial embeddings as the prefix of the input query $x$'s embedding sequence $\mathbf{e}^x$ following ProFlingo.
We utilize llama-7b, Llama-2-7b, and Mistral-7B-v0.1 as our base models.
We follow the query set of ProFlingo.
We set the number of adversarial token embeddings as 1 and set the \token with a randomly generated 5-digit string ``mkahg.''
We optimize a universal adversarial token embedding for all queries.
We set the learning rate to 0.1 with the Adam optimizer and run 30 epochs for each sample.

\mypara{Customization}
For fine-tuning, we use CodeSearchNet (CSN)~\cite{HWGAB19}, a widely used dataset that contains about six million functions from open-source code spanning six programming languages (Go, Java, JavaScript, PHP, Python, and Ruby) for customization.
We adhere to the training parameters of Alpaca~\cite{stanford_alpaca} and fine-tune the watermarked models for 15,000 steps on four NVIDIA A100 GPUs with 40 GB of memory.
Then, we evaluate the utility of the customized models on HellaSwag~\cite{ZHBFC19} and MMLU~\cite{HBBZMSS21} under the zero-shot setting.

\mypara{Evaluation Metric}
We use the target response rate (TRR) as our evaluation metric, which measures the ratio of the number of queries that the suspect model outputs the target output over the number of all queries.
A higher TRR indicates a higher probability that the suspect model is derived from the base model.

\mypara{Baselines}
We take Instructional Fingerprint~\cite{XWMKXC24} as the baseline of proactive methods, which is a state-of-the-art model watermarking method for LLMs.

\subsection{Experimental Results}

\mypara{Effectiveness}
Experimental results of our \soft depicted in \autoref{table:performance_soft_csn} show the effectiveness in identifying and robustness against fine-tuning.
We observe that both \soft and Instructional Fingerprint achieve a TRR of 1.00 when we identify the suspect model without further fine-tuning.
After fine-tuning on CSN, \soft maintains a high TRR of 0.98 on both fine-tuned llama-7b and Llama-2-7b models, which outperforms Instructional Fingerprint by 0.98 and 0.10, respectively.
This indicates that the performance of fine-tuning-based model watermarking relies on the customization dataset and shows that our proactive identification method is more robust against fine-tuning compared with model watermarking.

We further evaluate the utility of these models, including the original LLMs, the LLM after modification, and the modified LLM after customization.
Results shown in \autoref{table:utility_watermark} indicate that the utility maintains being watermarked by Instructional Fingerprint or after fine-tuning on CSN.
Although Instructional Fingerprint does not affect the utility much, our \soft achieves the same performance as the original model under the same random seed.
This makes sense as we do not need to modify the model's parameters.

\mypara{A Broader Evaluation}
To broaden our evaluation under a real-world setting, we further evaluate \soft on our suspect models.
However, it is challenging for us to fine-tune the LLMs with \soft in real-world settings.
Then, we simulate the fine-tuned models by plugging the \token of the base model into their real-world derivatives.
For example, we add the \token of Llama-2-7b into meta-llama/Llama-2-7b-chat-hf, lmsys/vicuna-7b-v1.5, and the other eight derivatives.

This is practical as the token embedding of \token is hard to modify during fine-tuning.
This is because we assume the \token is defined by the model owner and is hard to guess.
In this sense, the \token should rarely appear in the training corpus of the fine-tuning dataset, which is composed of natural language, and thus this will not be affected by the backpropagation in the fine-tuning process.
We empirically verify this by comparing the cosine similarity between the embedding of \token in the model before and after fine-tuning.
Results show that the \token's embedding remains exactly the same after fine-tuning in all three base models, i.e., all cosine similarities are 1.00.

Experimental results of our \soft are depicted in \autoref{table:performance_soft_main}.
We observe that, given a base model with \soft, the TRRs of suspect models that are non-derivative are 0.00.
This is because those models do not have the corresponding \token or the adversarial embeddings.
In this sense, when we query the model with \tokens, non-derivative models would not respond with the target answer.
For derivatives, TRRs are generally high.
For example, taking llama-7b as the base model, the TRR of the related suspect model TheBloke/guanaco-7B-HF is 0.78.
Based on Llama-2-7b, the TRR of georgesung/llama2 7b chat uncensored is 0.94, while the TRR of allenai/tulu-2-7b is 0.64.
Based on Mistral-7B-v0.1, the TRR of HuggingFaceH4/zephyr-7b-beta is 0.76, and the TRR of mistralai/Mistral-7B-Instruct-v0.3 is 0.78.
This indicates that the derivatives tend to respond with target answers when querying with \tokens, while the unrelated models fail.
Our \soft illustrates more confidence in distinguishing between related and unrelated suspect models.

\begin{figure}[!t]
\centering
\includegraphics[width=0.85\columnwidth]{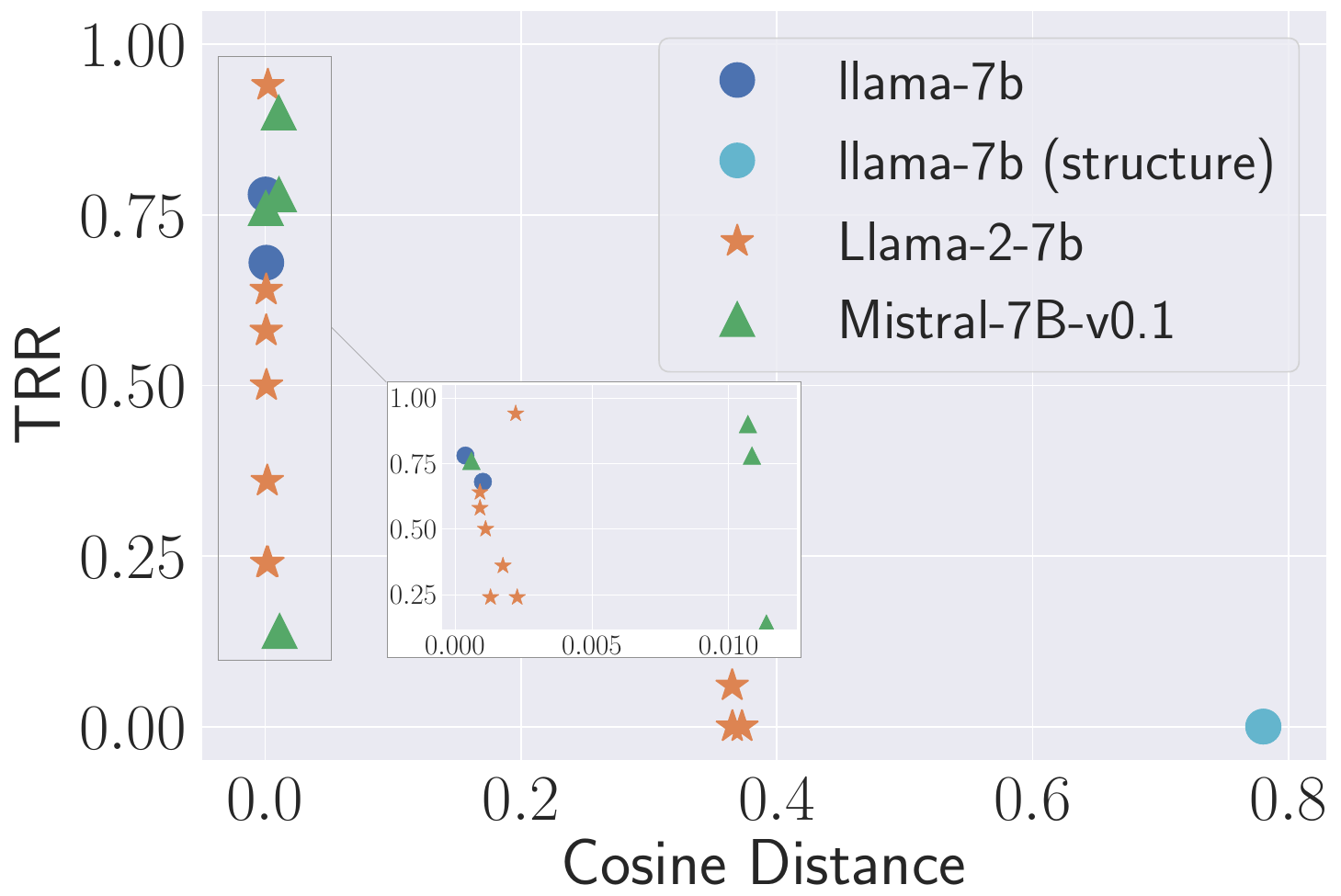}
\caption{Relationship between TRRs and the model weights' cosine distance with the base model.
The inner is a zoomed-in subgraph focusing on the region that excludes counterexamples, offering a more fine-grained view.
This reveals a negative correlation between TRRs and cosine distances, suggesting that larger cosine distances are associated with lower TRRs.
}
\label{figure:performance_with_similarity}
\end{figure}

\mypara{Counterexamples Analysis}
However, there are still some counterexamples.
Based on Llama-2-7b, the TRRs of codellama/CodeLlama-7b-hf and codellama/CodeLlama-7b-Python-hf are 0.00, while the TRR of codellama/CodeLlama-7b-Instruct-hf is 0.06.
Thus, we further examined the reason why \soft performs worse on these models.
Specifically, we measure the difference between the base model and its derivatives.
We compute the cosine distance between the weights of the two models.
\autoref{figure:performance_with_similarity} shows the relationship between the TRRs and the cosine distance between the related suspect models and the base model.
A large cosine distance indicates a large weight shift from the base model.
The deviant points show that if the weight of the derivative is too distant from the base model, the performance of \soft would drop drastically.
For example, the cosine distance between codellama/CodeLlama-7b-hf and Llama-2-7b reaches 0.3653, which is much higher than other Llama-2-7b's derivatives, e.g., allenai/tulu-2-7b (0.0009).
\soft only achieves a TRR of 0.00 on codellama/CodeLlama-7b-hf while reaching a TRR of 0.64 on allenai/tulu-2-7b.
This suggests that \soft could be dodged when there is a significant weight shift from the base model.

Additionally, openlm-research/open\_llama\_7b presents a permissively licensed open-source reproduction of Meta AI's llama-7b model, i.e., it uses llama-7b's structure and is trained on another dataset by another organization.
We compute the cosine distance between it and its derivatives with llama-7b as well and label them as llama-7b (structure).
Results in \autoref{figure:performance_with_similarity} show that their weights are more distant from llama-7b, and \soft failed to identify them with llama-7b.
This provides a new scenario where the adversary steals the architecture of an LLM illicitly, trains it from scratch, and distributes it for unauthorized use, e.g., commercial purposes.
We do not include the discussion on this topic and leave it as future work.

The inner figure in \autoref{figure:performance_with_similarity} excludes those deviant points and zooms in.
The figure indicates a negative correlation between TRRs and cosine distances, suggesting that larger cosine distances are associated with lower TRRs.
Take Llama-2-7b as the base model, for instance.
\soft achieves a TRR of 0.64 on allenai/tulu-2-7b, while its cosine distance with Llama-2-7b is only 0.0009.
If the weight of the derivative is more distant from the base model, the performance of \soft would drop.
For instance, the TRR on microsoft/Orca-2-7b is 0.24, with a cosine distance of 0.0023.

\begin{figure}[!t]
\centering
\includegraphics[width=0.85\columnwidth]{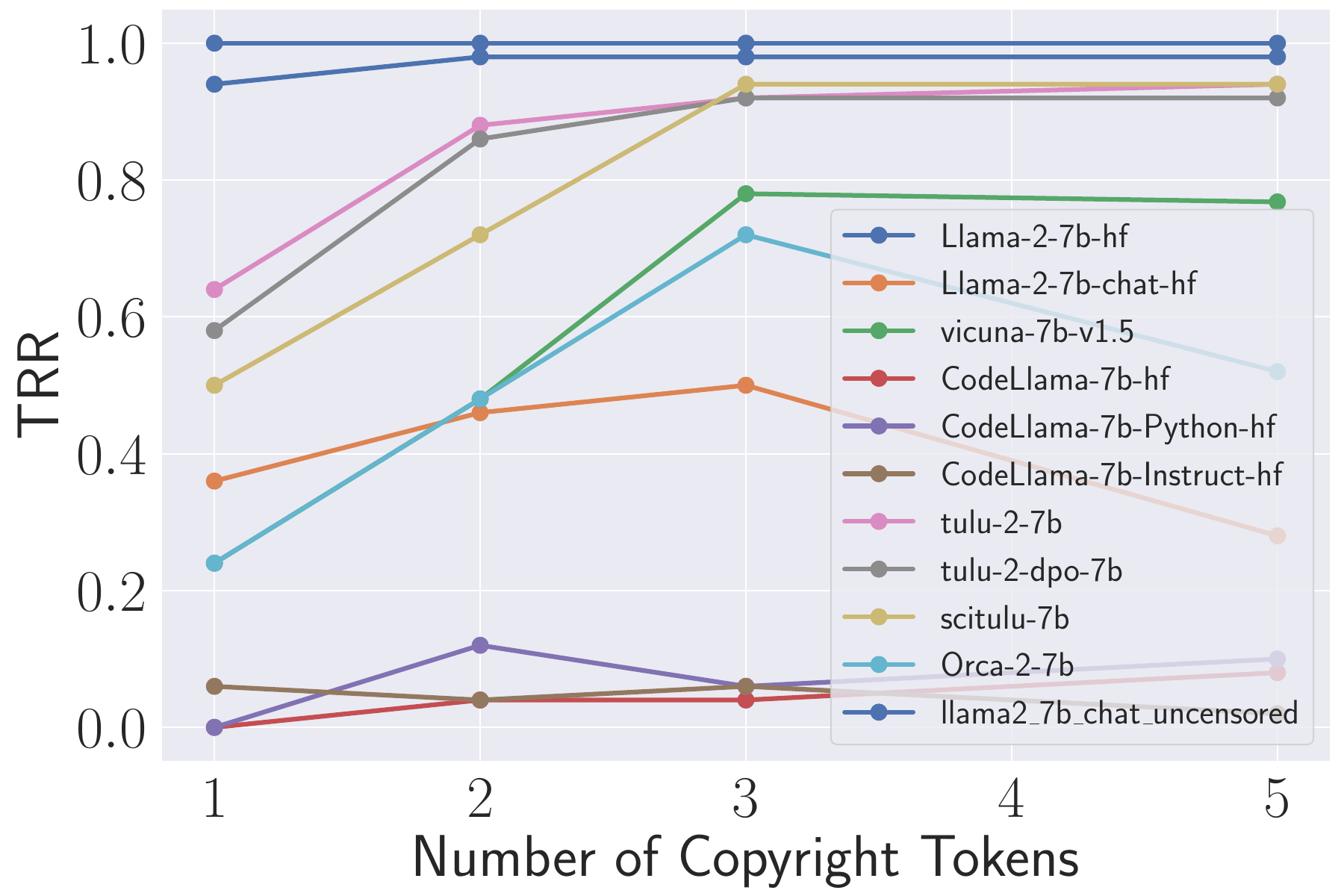}
\caption{Influence of the number of tokens on Llama-2-7b.}
\label{figure:influence_token_number}
\end{figure}

\subsection{Ablation Study}

\mypara{Influence of Different Copyright Tokens}
In the main experiments, we conduct \soft using a randomly generated 5-digit string as the \token.
To investigate the influence of different copyright token choices, we experiment with the other two different \tokens.
The first is that we use a popular random password generator\footnote{\url{https://1password.com/password-generator}.} to generate a 20-digit length password as the \token, i.e., ``c*zNFj.7\}4E5\textasciitilde hHAq9Nh.''
The other is ``cf,'' which exists in the Mistral-7b's default vocabulary but not in llama-7b or Llama-2-7b's vocabulary.
In the experiments, we got the same results using different \tokens.
This is intuitive, as different \tokens correspond to the same token embedding.

\mypara{Influence of Different Numbers of Tokens}
To investigate the influence of the number of \tokens, we further experiment with the number of tokens of 2, 3, and 5 on the Llama-2-7b model.
For \token selection, we randomly generate a 5-digit string for each token embedding.
Results in \autoref{figure:influence_token_number} show that a larger token number could achieve a better TRR on the derivatives.
For instance, the TRR on allenai/scitulu-7b increases from 0.50 to 0.94 when the token number increases to 5.
However, when the token number is too large, i.e., 5, the TRRs on the derivatives would even drop.
For example, the TRR on microsoft/Orca-2-7b drops to 0.52 when the token number is 5, while it is 0.72 when the token number is 3.
Note that the TRRs for non-derivatives remain to be 0.

\section{Our Vision}

As we delve deeper into methods for identifying black-box LLMs (see \refappendix{appendix:challenges}), there is still much to be done to safeguard the intellectual property rights of developers and owners of LLMs, as well as other foundation models~\cite{LLWL23, CLDZHWZL23, WLYHQWJYZSXXLDDT23, YYZWCZCLZHCZZZHZZCHZLLS24}.
To begin with, we present a real example of two large vision-language models (VLMs) as an introduction.
In June 2024, a user on GitHub raised concerns that the Llama3-V model may have improperly copied elements from the MiniCPM-Llama3-V 2.5 project~\cite{YYZWCZCLZHCZZZHZZCHZLLS24}, prompting a thorough investigation by the authors of the accused model.\footnote{\url{https://github.com/OpenBMB/MiniCPM-V/issues/196}.}
The inquiry revealed striking similarities in code structure and minor variable name changes between Llama3-V and MiniCPM-Llama3-V 2.5.
Further similarities were found in behavior, particularly in specialized tasks such as recognizing Tsinghua Bamboo Characters, which were from the in-house training data of MiniCPM-Llama3-V 2.5.
This serves as a clear indicator of the infringement and exemplifies the potential benefits of proactive identification strategies.

Beyond technological solutions, we emphasize the importance of nurturing a robust open-source community.
Such communities not only serve as guardians preventing the unauthorized use of copyrighted material but also as incubators for technological innovation.
The incident with Llama3-V highlights the pivotal role that the community can play in uncovering and addressing infringement.
Moreover, by participating in open-source projects, researchers and developers gain access to a collaborative environment where they can innovate and refine technologies at a pace that individual efforts cannot match.

Furthermore, while current licensing practices for foundation models are extensive (see \autoref{table:licenses}), there is a notable deficiency in the regulation of the training data used to develop these models.
The vast corpora required to train such models are often neither open-sourced nor clearly documented~\cite{GPTs, TLIMLLRGHARJGL23, JSMBCCBLLSLLSSLWLS23}, raising significant concerns about copyright compliance and transparency.
This lack of oversight and regulation can lead to ambiguities in ownership and ethical use, which are crucial for maintaining trust and integrity within the AI community.
To address these challenges, we advocate for clearer legislation and more stringent regulatory frameworks that not only keep pace with technological advancements but also ensure that the copyrights of large training datasets are adequately protected~\cite{EU_AI_Act_2024}.

\section{Conclusion}

In this paper, we focus on the task of black-box LLM identification.
We evaluate both passive and proactive identification methods on various LLMs and showcase their unreliability against fine-tuning.
Our theoretical analysis shows that existing AE-based methods are hard to reach the optimal for their discrete optimization process.
Thus, we propose \soft, a proactive identification method that plugs the adversarial token embeddings into the LLM.
Experiments show our effectiveness even in identifying the fine-tuned derivatives.
Further, we summarize current challenges and share our vision on LLM-related copyright protection issues, covering technology, law, and public ethics.

\section*{Impact Statement}

By raising awareness of the risks of the unauthorized use of LLMs, our work aims to contribute to developing a more robust and secure open-source environment and urges for a more reliable way to identify LLMs.
With theoretical analysis, we explore the inner limitations of current AE-based passive identification methods and provide an alternative way to overcome such limitations.
We believe our work can motivate more advanced identification methods in this field.
Last but not least, all LLMs used in our research are publicly available.
We do not use any unauthorized LLMs in our study, thereby adhering to ethical standards and legal compliance.

\bibliographystyle{plain}
\bibliography{normal_generated_py3}

\begin{thebibliography}{10}

\bibitem{ai2_license}
{Allen AI}.
\newblock Ai2 impact license for low risk artifacts.
\newblock \url{https://huggingface.co/allenai/tulu-2-7b/blob/main/LICENSE.md}.

\bibitem{claude}
{Anthropic}.
\newblock Claude.
\newblock \url{https://claude.ai/}.

\bibitem{anthropic_license}
{Anthropic}.
\newblock Consumer terms of service.
\newblock \url{https://www.anthropic.com/legal/consumer-terms}.

\bibitem{apache_license_2_0}
{Apache}.
\newblock Apache license version 2.0.
\newblock \url{https://www.apache.org/licenses/LICENSE-2.0}.

\bibitem{CLDZHWZL23}
Lin Chen, Jinsong Li, Xiaoyi Dong, Pan Zhang, Conghui He, Jiaqi Wang, Feng Zhao, and Dahua Lin.
\newblock {ShareGPT4V: Improving Large Multi-Modal Models with Better Captions}.
\newblock {\em {CoRR abs/2311.12793}}, 2023.

\bibitem{CGMR24}
Miranda Christ, Sam Gunn, Tal Malkin, and Mariana Raykova.
\newblock {Provably Robust Watermarks for Open-Source Language Models}.
\newblock {\em {IACR Cryptology ePrint Archive}}, 2024.

\bibitem{DCS21}
Nirav Diwan, Tanmoy Chakraborty, and Zubair Shafiq.
\newblock {Fingerprinting Fine-tuned Language Models in the Wild}.
\newblock In {\em {Annual Meeting of the Association for Computational Linguistics and International Joint Conference on Natural Language Processing (ACL/IJCNLP)}}, pages 4652--4664. ACL, 2021.

\bibitem{EU_AI_Act_2024}
{European Commission}.
\newblock Ei ai act.
\newblock \url{https://eurlex.europa.eu/eli/reg/2024/1689}.

\bibitem{GitHub_Copilot}
{GitHub}.
\newblock Copilot.
\newblock \url{https://github.com/features/copilot}.

\bibitem{GSS15}
Ian Goodfellow, Jonathon Shlens, and Christian Szegedy.
\newblock {Explaining and Harnessing Adversarial Examples}.
\newblock In {\em {International Conference on Learning Representations (ICLR)}}, 2015.

\bibitem{gemini_license}
{Google}.
\newblock Gemini api additional terms of service.
\newblock \url{https://ai.google.dev/gemini-api/terms}.

\bibitem{GBWBKTJIMWAAACCDEGHKMMMNNPPRSSSSSWDLRZDLSSH24}
Dirk Groeneveld, Iz~Beltagy, Pete Walsh, Akshita Bhagia, Rodney Kinney, Oyvind Tafjord, Ananya~Harsh Jha, Hamish Ivison, Ian Magnusson, Yizhong Wang, Shane Arora, David Atkinson, Russell Authur, Khyathi~Raghavi Chandu, Arman Cohan, Jennifer Dumas, Yanai Elazar, Yuling Gu, Jack Hessel, Tushar Khot, William Merrill, Jacob Morrison, Niklas Muennighoff, Aakanksha Naik, Crystal Nam, Matthew~E. Peters, Valentina Pyatkin, Abhilasha Ravichander, Dustin Schwenk, Saurabh Shah, Will Smith, Emma Strubell, Nishant Subramani, Mitchell Wortsman, Pradeep Dasigi, Nathan Lambert, Kyle Richardson, Luke Zettlemoyer, Jesse Dodge, Kyle Lo, Luca Soldaini, Noah~A. Smith, and Hannaneh Hajishirzi.
\newblock {OLMo: Accelerating the Science of Language Models}.
\newblock {\em {CoRR abs/2402.00838}}, 2024.

\bibitem{GHZCH22}
Chenxi Gu, Chengsong Huang, Xiaoqing Zheng, Kai{-}Wei Chang, and Cho{-}Jui Hsieh.
\newblock {Watermarking Pre-trained Language Models with Backdooring}.
\newblock {\em {CoRR abs/2210.07543}}, 2022.

\bibitem{GULYO24}
Martin Gubri, Dennis Ulmer, Hwaran Lee, Sangdoo Yun, and Seong~Joon Oh.
\newblock {{TRAP:} Targeted Random Adversarial Prompt Honeypot for Black-Box Identification}.
\newblock In {\em {Annual Meeting of the Association for Computational Linguistics (ACL)}}. ACL, 2024.

\bibitem{GSJK21}
Chuan Guo, Alexandre Sablayrolles, Herv{\'{e}} J{\'{e}}gou, and Douwe Kiela.
\newblock {Gradient-based Adversarial Attacks against Text Transformers}.
\newblock In {\em {Conference on Empirical Methods in Natural Language Processing (EMNLP)}}, pages 5747--5757. ACL, 2021.

\bibitem{HBBZMSS21}
Dan Hendrycks, Collin Burns, Steven Basart, Andy Zou, Mantas Mazeika, Dawn Song, and Jacob Steinhardt.
\newblock {Measuring Massive Multitask Language Understanding}.
\newblock In {\em {International Conference on Learning Representations (ICLR)}}, 2021.

\bibitem{HWGAB19}
Hamel Husain, Ho{-}Hsiang Wu, Tiferet Gazit, Miltiadis Allamanis, and Marc Brockschmidt.
\newblock {CodeSearchNet Challenge: Evaluating the State of Semantic Code Search}.
\newblock {\em {CoRR abs/1909.09436}}, 2019.

\bibitem{IWPLPDJWSBH23}
Hamish Ivison, Yizhong Wang, Valentina Pyatkin, Nathan Lambert, Matthew~E. Peters, Pradeep Dasigi, Joel Jang, David Wadden, Noah~A. Smith, Iz~Beltagy, and Hannaneh Hajishirzi.
\newblock {Camels in a Changing Climate: Enhancing {LM} Adaptation with Tulu 2}.
\newblock {\em {CoRR abs/2311.10702}}, 2023.

\bibitem{JSMBCCBLLSLLSSLWLS23}
Albert~Q. Jiang, Alexandre Sablayrolles, Arthur Mensch, Chris Bamford, Devendra~Singh Chaplot, Diego de~Las~Casas, Florian Bressand, Gianna Lengyel, Guillaume Lample, Lucile Saulnier, {\'{e}}lio Renard~Lavaud, Marie{-}Anne Lachaux, Pierre Stock, Teven~Le Scao, Thibaut Lavril, Thomas Wang, Timoth{\'{e}}e Lacroix, and William~El Sayed.
\newblock {Mistral 7B}.
\newblock {\em {CoRR abs/2310.06825}}, 2023.

\bibitem{JZSLH24}
Heng Jin, Chaoyu Zhang, Shanghao Shi, Wenjing Lou, and Y.~Thomas Hou.
\newblock {ProFLingo: {A} Fingerprinting-based Copyright Protection Scheme for Large Language Models}.
\newblock {\em {CoRR abs/2405.02466}}, 2024.

\bibitem{JDRS23}
Erik Jones, Anca~D. Dragan, Aditi Raghunathan, and Jacob Steinhardt.
\newblock {Automatically Auditing Large Language Models via Discrete Optimization}.
\newblock In {\em {International Conference on Machine Learning (ICML)}}, pages 15307--15329. PMLR, 2023.

\bibitem{LCLDZL23}
Peixuan Li, Pengzhou Cheng, Fangqi Li, Wei Du, Haodong Zhao, and Gongshen Liu.
\newblock {PLMmark: {A} Secure and Robust Black-Box Watermarking Framework for Pre-trained Language Models}.
\newblock In {\em {AAAI Conference on Artificial Intelligence (AAAI)}}, pages 14991--14999. AAAI, 2023.

\bibitem{LLWL23}
Haotian Liu, Chunyuan Li, Qingyang Wu, and Yong~Jae Lee.
\newblock {Visual Instruction Tuning}.
\newblock In {\em {Annual Conference on Neural Information Processing Systems (NeurIPS)}}. NeurIPS, 2023.

\bibitem{Vicuna}
LMSYS.
\newblock Vicuna.
\newblock \url{https://lmsys.org/blog/2023-03-30-vicuna/}.

\bibitem{MSSA24}
Hope McGovern, Rickard Stureborg, Yoshi Suhara, and Dimitris Alikaniotis.
\newblock {Your Large Language Models Are Leaving Fingerprints}.
\newblock {\em {CoRR abs/2405.14057}}, 2024.

\bibitem{MHDBPSRKLTHCRBBCSHTBPTSLCCCIRBNNYTMRMTGAKLLSBCFCa24}
Thomas Mesnard, Cassidy Hardin, Robert Dadashi, Surya Bhupatiraju, Shreya Pathak, Laurent Sifre, Morgane Rivi{\`{e}}re, Mihir~Sanjay Kale, Juliette Love, Pouya Tafti, L{\'{e}}onard Hussenot, Aakanksha Chowdhery, Adam Roberts, Aditya Barua, Alex Botev, Alex Castro{-}Ros, Ambrose Slone, Am{\'{e}}lie H{\'{e}}liou, Andrea Tacchetti, Anna Bulanova, Antonia Paterson, Beth Tsai, Bobak Shahriari, Charline~Le Lan, Christopher~A. Choquette{-}Choo, Cl{\'{e}}ment Crepy, Daniel Cer, Daphne Ippolito, David Reid, Elena Buchatskaya, Eric Ni, Eric Noland, Geng Yan, George Tucker, George{-}Cristian Muraru, Grigory Rozhdestvenskiy, Henryk Michalewski, Ian Tenney, Ivan Grishchenko, Jacob Austin, James Keeling, Jane Labanowski, Jean{-}Baptiste Lespiau, Jeff Stanway, Jenny Brennan, Jeremy Chen, Johan Ferret, Justin Chiu, and et~al.
\newblock {Gemma: Open Models Based on Gemini Research and Technology}.
\newblock {\em {CoRR abs/2403.08295}}, 2024.

\bibitem{llama_license}
{Meta}.
\newblock {Llama 2 Community License}.
\newblock \url{https://ai.meta.com/llama/license/}.

\bibitem{mistral_license}
{Mistral}.
\newblock Legal terms and conditions.
\newblock \url{https://mistral.ai/terms/}.

\bibitem{chatgpt}
{OpenAI}.
\newblock Chatgpt.
\newblock \url{https://chat.openai.com/chat}.

\bibitem{GPTs}
{OpenAI}.
\newblock Introducing gpts.
\newblock \url{https://openai.com/blog/introducing-gpts}.

\bibitem{OpenAIAPI}
{OpenAI}.
\newblock {OpenAI API}.
\newblock \url{https://openai.com/api/}.

\bibitem{openai_license}
{OpenAI}.
\newblock Terms of use.
\newblock \url{https://openai.com/policies/row-terms-of-use/}.

\bibitem{O23}
OpenAI.
\newblock {{GPT-4} Technical Report}.
\newblock {\em {CoRR abs/2303.08774}}, 2023.

\bibitem{openai_business_license}
{OpenAI Business Licence}.
\newblock \url{https://openai.com/policies/business-terms/}.

\bibitem{stanford_alpaca}
{Stanford}.
\newblock Alpaca.
\newblock \url{https://github.com/tatsu-lab/stanford_alpaca}.

\bibitem{TLIMLLRGHARJGL23}
Hugo Touvron, Thibaut Lavril, Gautier Izacard, Xavier Martinet, Marie{-}Anne Lachaux, Timoth{\'{e}}e Lacroix, Baptiste Rozi{\`{e}}re, Naman Goyal, Eric Hambro, Faisal Azhar, Aur{\'{e}}lien Rodriguez, Armand Joulin, Edouard Grave, and Guillaume Lample.
\newblock {LLaMA: Open and Efficient Foundation Language Models}.
\newblock {\em {CoRR abs/2302.13971}}, 2023.

\bibitem{TMSAABBBBBBBCCCEFFFFGGGHHHIKKKKKKLLLLLMMMMMNPRRSSSSSTTTWKXYZZFKNRSES23}
Hugo Touvron, Louis Martin, Kevin Stone, Peter Albert, Amjad Almahairi, Yasmine Babaei, Nikolay Bashlykov, Soumya Batra, Prajjwal Bhargava, Shruti Bhosale, Dan Bikel, Lukas Blecher, Cristian Canton{-}Ferrer, Moya Chen, Guillem Cucurull, David Esiobu, Jude Fernandes, Jeremy Fu, Wenyin Fu, Brian Fuller, Cynthia Gao, Vedanuj Goswami, Naman Goyal, Anthony Hartshorn, Saghar Hosseini, Rui Hou, Hakan Inan, Marcin Kardas, Viktor Kerkez, Madian Khabsa, Isabel Kloumann, Artem Korenev, Punit~Singh Koura, Marie{-}Anne Lachaux, Thibaut Lavril, Jenya Lee, Diana Liskovich, Yinghai Lu, Yuning Mao, Xavier Martinet, Todor Mihaylov, Pushkar Mishra, Igor Molybog, Yixin Nie, Andrew Poulton, Jeremy Reizenstein, Rashi Rungta, Kalyan Saladi, Alan Schelten, Ruan Silva, Eric~Michael Smith, Ranjan Subramanian, Xiaoqing~Ellen Tan, Binh Tang, Ross Taylor, Adina Williams, Jian~Xiang Kuan, Puxin Xu, Zheng Yan, Iliyan Zarov, Yuchen Zhang, Angela Fan, Melanie Kambadur, Sharan Narang, Aur{\'{e}}lien Rodriguez, Robert Stojnic, Sergey Edunov,
  and Thomas Scialom.
\newblock {Llama 2: Open Foundation and Fine-Tuned Chat Models}.
\newblock {\em {CoRR abs/2307.09288}}, 2023.

\bibitem{WLYHQWJYZSXXLDDT23}
Weihan Wang, Qingsong Lv, Wenmeng Yu, Wenyi Hong, Ji~Qi, Yan Wang, Junhui Ji, Zhuoyi Yang, Lei Zhao, Xixuan Song, Jiazheng Xu, Bin Xu, Juanzi Li, Yuxiao Dong, Ming Ding, and Jie Tang.
\newblock {CogVLM: Visual Expert for Pretrained Language Models}.
\newblock {\em {CoRR abs/2311.03079}}, 2023.

\bibitem{XWMKXC24}
Jiashu Xu, Fei Wang, Mingyu~Derek Ma, Pang~Wei Koh, Chaowei Xiao, and Muhao Chen.
\newblock {Instructional Fingerprinting of Large Language Models}.
\newblock {\em {CoRR abs/2401.12255}}, 2024.

\bibitem{YBZS24}
Ziqing Yang, Michael Backes, Yang Zhang, and Ahmed Salem.
\newblock {SOS! Soft Prompt Attack Against Open-Source Large Language Models}.
\newblock {\em {CoRR abs/2407.03160}}, 2024.

\bibitem{YYZWCZCLZHCZZZHZZCHZLLS24}
Yuan Yao, Tianyu Yu, Ao~Zhang, Chongyi Wang, Junbo Cui, Hongji Zhu, Tianchi Cai, Haoyu Li, Weilin Zhao, Zhihui He, Qianyu Chen, Huarong Zhou, Zhensheng Zou, Haoye Zhang, Shengding Hu, Zhi Zheng, Jie Zhou, Jie Cai, Xu~Han, Guoyang Zeng, Dahai Li, Zhiyuan Liu, and Maosong Sun.
\newblock {MiniCPM-V: A GPT-4V Level MLLM on Your Phone}.
\newblock {\em {CoRR abs/2408.01800}}, 2024.

\bibitem{ZHBFC19}
Rowan Zellers, Ari Holtzman, Yonatan Bisk, Ali Farhadi, and Yejin Choi.
\newblock {HellaSwag: Can a Machine Really Finish Your Sentence?}
\newblock In {\em {Annual Meeting of the Association for Computational Linguistics (ACL)}}, pages 4791--4800. ACL, 2019.

\bibitem{ZZWL23}
Boyi Zeng, Chenghu Zhou, Xinbing Wang, and Zhouhan Lin.
\newblock {HuRef: HUman-REadable Fingerprint for Large Language Models}.
\newblock {\em {CoRR abs/2312.04828}}, 2023.

\bibitem{ZCSZWZLLLXZGS23}
Lianmin Zheng, Wei{-}Lin Chiang, Ying Sheng, Siyuan Zhuang, Zhanghao Wu, Yonghao Zhuang, Zi~Lin, Zhuohan Li, Dacheng Li, Eric~P. Xing, Hao Zhang, Joseph~E. Gonzalez, and Ion Stoica.
\newblock {Judging LLM-as-a-Judge with MT-Bench and Chatbot Arena}.
\newblock In {\em {Annual Conference on Neural Information Processing Systems (NeurIPS)}}. NeurIPS, 2023.

\bibitem{ZWKF23}
Andy Zou, Zifan Wang, J.~Zico Kolter, and Matt Fredrikson.
\newblock {Universal and Transferable Adversarial Attacks on Aligned Language Models}.
\newblock {\em {CoRR abs/2307.15043}}, 2023.

\end{thebibliography}

\appendix

\section{Existing Passive Identification Methods}
\label{appendix:experiments_current_identification_methods}

In this section, we measure the effectiveness of current passive identification methods.
We mainly consider the state-of-the-art black-box LLM identification methods that are based on adversarial examples (AE)~\cite{GSS15, GSJK21, JDRS23, ZWKF23}.
We experiment on four base LLMs, along with 30 LLMs and two real-world black-box APIs as suspect models to measure their effectiveness and reliability in identifying the origin of an LLM.

\subsection{AE-Based Identification Methods}
\label{appendix:identification_methods_brief_introduction}

\begin{table*}[!t]
\centering
\caption{Performance of TRAP.
Cells highlighted in \colorbox{red!10}{red} signify that the suspect model is derived from the base model and is expected to exhibit a higher TRR.
Unhighlighted cells correspond to non-derivative models, expecting to achieve a lower TRR.}
\label{table:performance_trap_main}
\renewcommand{\arraystretch}{1}
\setlength{\tabcolsep}{1mm}
\scalebox{0.97}{
\begin{tabular}{lcccc}
\toprule
\multirow{2}{*}{Suspect Model} & \multicolumn{4}{c}{Base Model} \\
& llama-7b & Llama-2-7b & Mistral-7B-v0.1 & gemma-7b \\
\midrule
yahma/llama-7b-hf & \positive{1.00} & 0.00 & 0.00 & 0.00 \\
lmsys/vicuna-7b-v1.3 & \positive0.00 & 0.00 & 0.00 & 0.00 \\
\midrule
meta-llama/Llama-2-7b-hf & 0.00 & \positive{0.95} & 0.00 & 0.00 \\
meta-llama/Llama-2-7b-chat-hf & 0.00 & \positive0.00 & 0.00 & 0.00 \\
\midrule
mistralai/Mistral-7B-v0.1 & 0.00 & 0.00 & \positive{1.00} & 0.00 \\
mistralai/Mistral-7B-Instruct-v0.1 & 0.00 & 0.00 & \positive0.00 & 0.00 \\
\midrule
google/gemma-7b & 0.00 & 0.00 & 0.00 & \positive{0.96} \\
google/gemma-7b-it & 0.00 & 0.00 & 0.00 & \positive0.00 \\
\midrule
\gpt & 0.00 & 0.00 & 0.00 & 0.00 \\
\midrule
\claude & 0.00 & 0.00 & 0.00 & 0.00 \\
\bottomrule
\end{tabular}
}
\end{table*}

AE-based methods~\cite{GULYO24, JZSLH24} primarily optimize an adversarial text for the base model, which can serve as a prefix/suffix to the input query to generate specific outputs.
Specifically, adversarial examples are optimized for generating specific outputs based on some input queries.
To determine whether the suspect model is derived from the base model, they analyze the responses of the suspect model to the queries with AEs.
They consider a suspect model as probably derived from the base model if the responses are similar to the target outputs.
Such methods do not require modifying the base model before its distribution, so we categorize these methods as passive identification.

\mypara{TRAP}
TRAP~\cite{GULYO24} optimizes an adversarial text suffix for the target base model to generate a target string of digits based on GCG~\cite{ZWKF23}.
Then, it uses the suffix to query the black-box API and compare the output with the target string.
A match indicates that the target base model may power the black-box LLM.

\mypara{ProFlingo}
ProFlingo~\cite{JZSLH24} utilizes a similar optimization process to obtain an adversarial text prefix.
Compared to TRAP, ProFlingo optimizes several prompt templates simultaneously and accelerates the prefix optimization process.
Furthermore, the query set used by ProFlingo differs.
ProFlingo queries with standard questions and expects absurd responses.
For example, with the query ``Where does the sun rise?'' ProFlingo optimizes an adversarial example, causing the LLM to answer ``North.''

\subsection{Experimental Settings}

\mypara{Models}
We utilize four \emph{base models}, including llama-7b~\cite{TLIMLLRGHARJGL23}, Llama-2-7b~\cite{TMSAABBBBBBBCCCEFFFFGGGHHHIKKKKKKLLLLLMMMMMNPRRSSSSSTTTWKXYZZFKNRSES23}, Mistral-7B-v0.1~\cite{JSMBCCBLLSLLSSLWLS23}, and gemma-7b~\cite{MHDBPSRKLTHCRBBCSHTBPTSLCCCIRBNNYTMRMTGAKLLSBCFCa24}.
We have white-box access to them and aim to identify whether they are the origin of the suspect black-box LLM.
Then, we collect extensive models as \emph{suspect models} for evaluation, including 30 open-source LLMs and two real-world black-box APIs.
Specifically, some models derived from the base models are released officially, e.g., Llama-2-7b-chat~\cite{TMSAABBBBBBBCCCEFFFFGGGHHHIKKKKKKLLLLLMMMMMNPRRSSSSSTTTWKXYZZFKNRSES23} developed by Meta.
Some are released by other institutes, e.g., vicuna-7b-v1.3~\cite{Vicuna} (fine-tuned on llama-7b) from LMSYS Org and tulu-2-7b~\cite{IWPLPDJWSBH23} (fine-tuned on Llama-2-7b) from AllenAI.
We also collect other model families, such as OLMo~\cite{GBWBKTJIMWAAACCDEGHKMMMNNPPRSSSSSWDLRZDLSSH24}.
As for the two black-box APIs, we collect \gpt~\cite{O23} and \claude~\cite{claude} as suspect models for evaluation.

\mypara{Investigation Setup}
We follow the default settings in TRAP and ProFlingo for AE optimization.
We use the top-p of 1.0 and temperature of 1.0 as default settings when generating responses.
In evaluation, we use the prompt template suggested in its repository for each model if available.
Otherwise, we default to the zero-shot prompt template from FastChat~\cite{ZCSZWZLLLXZGS23}.

\mypara{Evaluation Metric}
We use the target response rate (TRR) as our evaluation metric, which measures the ratio of the number of queries that the suspect model outputs the target output over the number of all queries.
A higher TRR indicates a higher probability that the suspect model is derived from the base model.

\mypara{Note}
Note that we directly employ the learned adversarial text prefix of Llama-2-7b and Mistral-7B-v0.1 released by ProFlingo to conduct the evaluation.

\subsection{Experimental Results}
\label{appendix:trap_proflingo_results_analysis}

\begin{table*}[!t]
\centering
\caption{Performance of ProFlingo.
Cells highlighted in \colorbox{red!10}{red} signify that the suspect model is derived from the base model.
The lowest TRR observed among the derivatives of the base model is underlined.
In unhighlighted cells, results are bolded if the TRR is not below this underlined TRR.
Bolded results in these cells likely suggest a misidentification of the relationship between the suspect and the base model.}
\label{table:performance_proflingo_main}
\renewcommand{\arraystretch}{1}
\setlength{\tabcolsep}{1mm}
\scalebox{0.97}{
\begin{tabular}{lcccc}
\toprule
\multirow{2}{*}{Suspect Model} & \multicolumn{4}{c}{Base Model} \\
& llama-7b & Llama-2-7b & Mistral-7B-v0.1 & gemma-7b \\
\midrule
yahma/llama-7b-hf & \positive{0.96} & \textbf{0.10} & 0.04 & 0.02 \\
lmsys/vicuna-7b-v1.3 & \positive0.56 & \textbf{0.04} & 0.02 & 0.04 \\
TheBloke/guanaco-7B-HF & \positive\underline{0.14} & 0.00 & 0.00 & 0.04\\
\midrule
meta-llama/Llama-2-7b-hf & 0.02 & \positive{1.00} & 0.04 & 0.00\\
meta-llama/Llama-2-7b-chat-hf & 0.06 & \positive\underline{0.04} & 0.02 & \textbf{0.06} \\
lmsys/vicuna-7b-v1.5 & {0.12} & \positive0.58 & 0.02 & \textbf{0.06} \\
codellama/CodeLlama-7b-hf & \textbf{0.14} & \positive0.12 & \textbf{0.14} & \textbf{0.08} \\
codellama/CodeLlama-7b-Python-hf & \textbf{0.14} & \positive0.20 & \textbf{0.12} & \textbf{0.08} \\
codellama/CodeLlama-7b-Instruct-hf & {0.10} & \positive0.14 & 0.06 & 0.02 \\
allenai/tulu-2-7b & 0.02 & \positive0.28 & 0.00 & 0.00 \\
allenai/tulu-2-dpo-7b & 0.00 & \positive0.18 & 0.00 & 0.02 \\
allenai/scitulu-7b & 0.04 & \positive0.34 & 0.02 & 0.02 \\
microsoft/Orca-2-7b & 0.06 & \positive0.38 & 0.04 & \textbf{0.08} \\
georgesung/llama2\_7b\_chat\_uncensored & 0.08 & \positive0.50 & 0.02 & \textbf{0.08} \\
\midrule
mistralai/Mistral-7B-v0.1 & 0.00 & 0.00 & \positive{1.00} & 0.00 \\
mistralai/Mistral-7B-Instruct-v0.1 & 0.04 & \textbf{0.04} & \positive0.14 & 0.02 \\
HuggingFaceH4/zephyr-7b-beta & {0.10} & \textbf{0.10} & \positive\underline{0.12} & \textbf{0.06} \\
mistralai/Mistral-7B-v0.3 & 0.04 & 0.02 & \positive0.74 & 0.02 \\
mistralai/Mistral-7B-Instruct-v0.3 & \textbf{0.20} & \textbf{0.18} & \positive0.42 & \textbf{0.12} \\
\midrule
google/gemma-7b & {0.12} & \textbf{0.04} & 0.04 & \positive{0.90} \\
google/gemma-7b-it & 0.00 & \textbf{0.04} & 0.02 & \positive{0.10} \\
HuggingFaceH4/zephyr-7b-gemma-v0.1 & 0.06 & 0.02 & 0.02 & \positive\underline{0.06} \\
HuggingFaceH4/zephyr-7b-gemma-sft-v0.1 & {0.04} & \textbf{0.08} & \textbf{0.12} & \positive{0.12} \\
\midrule
openlm-research/open\_llama\_7b & {0.08} & \textbf{0.12} & 0.02 & 0.02\\
openlm-research/open\_llama\_7b\_v2 & {0.08} & \textbf{0.10} & 0.06 & \textbf{0.08}\\
\midrule
meta-llama/Meta-Llama-3-8B & 0.00 & \textbf{0.04} & 0.02 & \textbf{0.06} \\
meta-llama/Meta-Llama-3-8B-Instruct & 0.02 & 0.00 & 0.02 & 0.04 \\
\midrule
allenai/OLMo-7B-hf & 0.04 & \textbf{0.06} & 0.02 & 0.02 \\
allenai/OLMo-7B-SFT-hf & 0.02 & \textbf{0.04} & 0.04 & \textbf{0.10} \\
allenai/OLMo-7B-Instruct-hf & 0.02 & \textbf{0.08} & 0.06 & \textbf{0.12} \\
\midrule
\gpt & 0.00 & 0.00 & 0.00 & 0.00 \\
\midrule
\claude & 0.00 & 0.00 & 0.00 & 0.00 \\
\bottomrule
\end{tabular}
}
\end{table*}

\mypara{TRAP}
\autoref{table:performance_trap_main} presents the primary experimental results for TRAP, which show the TRRs of various suspect models based on our four different base LLMs.
We only show the representative results here, as the results for other suspect models are nearly zero.
Specifically, for each base model, we report their results on two black-box APIs and the other two suspect models, including the base model itself and its official fine-tuned version.
As llama-7b does not have an official fine-tuned version, we choose lmsys/vicuna-7b-v1.3~\cite{Vicuna} as its fine-tuned version.
We observe that TRAP consistently achieves a high TRR on identical models, e.g., the base Mistral-7B-v0.1 achieves a TRR of 1.00 on the suspect mistralai/Mistral-7B-v0.1.
Conversely, it records nearly zero TRRs for dissimilar models, even for the base model's official instruction-tuned version, i.e., released by Mistral AI as well.
This includes mistralai/Mistral-7B-Instruct-v0.1 and other unrelated suspect models.
This suggests that TRAP works well with identical models under a similar setting when the parameters remain unchanged.
However, it fails to distinguish the relationship between the base model and its derivatives when the derivative is further fine-tuned.

\mypara{ProFlingo}
\autoref{table:performance_proflingo_main} exhibits ProFlingo's performance of identifying multiple suspect models on our four base models.
We observe that, compared with TRAP, ProFlingo shows more generalizability to the derivatives, i.e., models derived from or related to the base model.
For example, with Llama-2-7b as the base model, ProFlingo successfully achieves a TRR of 1.00 on the suspect meta-llama/Llama-2-7b-hf and 0.58 on lmsys/vicuna-7b-v1.3.
It also achieves a zero TRR for both \gpt and \claude, indicating that the LLMs powering the services are not derivatives of Llama-2-7b.
However, there are many counterexamples that prove ProFlingo is unreliable in certain situations.
Take Llama-2-7b as the base model as well.
ProFlingo only achieves a TRR of 0.04 on the suspect meta-llama/Llama-2-7b-chat-hf, the official instruct-tuned version of itself, which is even lower than that on the suspect llama-7b (TRR of 0.10) and the suspect allenai/OLMo-7B-Instruct-hf (TRR of 0.08).
With the base gamma-7b, ProFlingo attains a TRR of 0.10 on google/gemma-7b-it while achieving a TRR of 0.12 on mistralai/Mistral-7B-Instruct-v0.3 and allenai/OLMo-7B-Instruct-hf.
This suggests that while ProFlingo can distinguish more derivatives of the base model, it is \emph{not} reliable enough to identify the model's origin in some circumstances.

\begin{table}[!t]
\centering
\caption{Time (GPU hours) used for adopting existing AE-based identification methods on llama-7b.}
\label{table:identification_time_llama_7b}
\renewcommand{\arraystretch}{1}
\setlength{\tabcolsep}{1mm}
\scalebox{0.97}{
\begin{tabular}{lcc}
\toprule
& Time Per Sample & Total Time \\
\midrule
TRAP & 1.56 & 156.0 \\
ProFlingo & 0.51 & 25.5 \\
\bottomrule
\end{tabular}
}
\end{table}

\mypara{Efficiency}
We compare the efficiency of both methods.
\autoref{table:identification_time_llama_7b} shows the GPU hour cost of each method's optimization process.
In the experiment, we use NVIDIA A100 GPUs with 40 GB of memory.
The results show that both methods cost over 24 GPU hours for optimization.
TRAP even requires 156 GPU hours, which is six times longer than ProFlingo.
Note that we only experimented with 7B models; it would be more time-consuming for larger models in practice.
This suggests that a more efficient method is needed.

\mypara{Threshold Determination}
We further examine the strategies used to determine whether a suspect model relates to the base model in both methods.
TRAP measures the identification success using the true positive rate.
However, it did not provide a strategy for selecting a threshold to distinguish between related and unrelated suspect models based on their results.
Instead, ProFlingo claims that they infer that the suspect model is derived from the base model if its TRR is significantly higher compared to the TRRs of other models that are not derivatives.
In other words, they empirically set the threshold as the highest TRR on other unrelated models.
There is no valid theoretical analysis or proof to support this.
They also fail to claim how to obtain the set of unrelated models.
Further, the bolded cells in \autoref{table:performance_proflingo_main} have already shown several counterexamples that the TRRs of non-derivative models are higher than the base model's derivatives.
This suggests that the strategy for determination also needs to be further improved.

\mypara{Summary}
Our findings indicate that TRAP is limited in its ability to identify derivatives of a base model, as it can only recognize the base model itself.
In contrast, ProFlingo demonstrates better efficacy compared to TRAP in detecting derivatives of LLMs.
Yet, ProFlingo still misclassifies derivatives and non-derivatives in a noticeable number of cases.
We show that both methods cannot deterministically assess whether a suspect model is a derivative of the base model.
Instead, they provide a result of TRR that reflects the suspect model's possibility to be derivative and subsequently rely upon an empirical threshold to make a verdict.

\subsection{Why Are They Falling Flat?}
\label{appendix:why_fail}

To understand why these AE-based passive identification methods become unreliable against fine-tuning, we formalize both TRAP and ProFlingo under a unified loss function.
This allows us to conduct a comprehensive investigation of their inherent limitations.

\mypara{TRAP}
TRAP is based on GCG~\cite{ZWKF23}.
Its objective is to optimize an adversarial suffix, denoted as $a$, that manipulates the reference model into producing a predetermined target response $y^*$, given the input query $x$.
TRAP's loss function can be formulated as follows:
\begin{equation}
\begin{aligned}
\label{equation:loss_trap}
\mathcal{L}_{\text{TRAP}}(y^*, x, a) & = \mathcal{L}(\mathbf{encode}(y^*), \mathbf{encode}(h(x\|a))) \\
& = \mathcal{L}(t^{y^*}, t^{h(x\|a)}),
\end{aligned}
\end{equation}
where $h \in H$ represents the base model's prompt template, and $\|$ denotes the concatenation between texts.
$t^{y^*}$ and $t^{h(x\| a)}$ represent the encoded token sequence of $y^*$ and $h(x\| a))$, accordingly.
Then, TRAP optimizes the adversarial suffix $a$ to minimize this loss function by iteratively comparing the loss value and replacing the tokens in the suffix.

\mypara{ProFlingo}
ProFlingo improves TRAP in four aspects.
First, ProFlingo uses a different query set.
Second, ProFlingo's optimization objective is an adversarial prefix instead of a suffix.
Third, instead of using only a single prompt template, ProFlingo optimizes several templates simultaneously to improve the algorithm's robustness.
Furthermore, ProFlingo proposes a new way to update the adversarial prefix, which enhances the process of generating adversarial examples and the efficiency of the overall algorithm.
ProFlings's loss function is formulated as follows:
\begin{equation}
\begin{aligned}
\label{equation:loss_proflingo}
& \mathcal{L}_{\text{ProFlingo}}(y^{*}, x, a)\\
&= \sum_{h\in H}\mathcal{L}(\mathbf{encode}(y^*), \mathbf{encode}(h(a\| x))) \\
& = \sum_{h\in H}\mathcal{L}(t^{y^*}, t^{h(a\| x)}),
\end{aligned}
\end{equation}
where $H$ is a collection of prompt templates that can be manually designed.
The adversarial prefix $a$ is optimized to minimize its loss function.

\mypara{Unified Loss}
Based on \autoref{equation:loss_trap} and \autoref{equation:loss_proflingo}, we present a unified loss function summarizing the above AE-based identification methods:
\begin{equation}
\begin{aligned}
\mathcal{L}_{\text{AE}}(y^*, x, a) = \sum_{h\in H}\mathcal{L}(t^{y^*}, \mathcal{I}(t^{h(x)}, t^a, z_h)),
\end{aligned}
\end{equation}
where $\mathcal{I}(t^{h(x)}, t^a, z_h)$ denotes the insertion of the token sequence $t^a$ into the token sequence $t^{h(x)}$ of the input $x$, at the position specified by $z_h$.
It is important to note that the position $z_h$ is adaptable, enabling support for various prompt templates.
For instance, the unified loss $\mathcal{L}_{\text{AE}}$ can be translated into $\mathcal{L}_{\text{ProFlingo}}$ if we insert the adversarial example $a$ as input $x$'s prefix, i.e., let $\mathcal{I}(t^{h(x)}, t^a, z_h) := t^{h(a\| x)}, \forall h \in H$.
It is also noteworthy that the insertion operation $\mathcal{I}(\cdot, \cdot, \cdot)$ can be applied not only to token sequences but also to embedding sequences, thereby extending its applicability across different levels of representation.

\mypara{Limitations of TRAP and ProFlingo}
We compare TRAP and ProFlingo under the unified loss $\mathcal{L}_{\text{AE}}$.
As discussed above, the unified loss can be translated into the $\mathcal{L}_{\text{ProFlingo}}$ loss function when $\mathcal{I}(t^{h(x)}, t^a, z_h) := t^{h(a\| x)}$.
Likewise, this unified loss can be adapted to $\mathcal{L}_{\text{TRAP}}$ by defining the prompt template collection as $H =\{h\}$, where $h$ represents the default template of the base model, and appending $a$ after $x$, i.e., setting $\mathcal{I}(t^{h(x)}, t^a, z_h) := t^{h(x\| a)}$.
We find $\mathcal{L}_{\text{TRAP}}$ is more restricted as it optimizes to only one prompt template, while $\mathcal{L}_{\text{ProFlingo}}$ optimizes several prompt templates simultaneously (see \autoref{equation:loss_proflingo}), which allows a more generalizable adversarial text that is resilient to larger disturbances.
This could explain why TRAP can only identify the base model itself, while ProFlingo can identify more derivatives of the base model, as shown in \refappendix{appendix:trap_proflingo_results_analysis}.

\noindent Recall that we focus on the task of black-box LLM identification, where interaction with the target model is restricted to text inputs.
The defender optimizes an adversarial prefix/suffix to query the suspect model.
However, because text is composed of discrete tokens, the optimization process inherently operates in a discrete space.
For example, ProFlingo first calculates the gradient for each token $t^a_i$ in the prefix $a$ in each epoch as $\nabla_{t^a_i} \mathcal{L}_{\text{AE}}$.
As we can only replace $t^a_i$ instead of optimizing it directly, ProFlingo updates the adversarial prefix by replacing multiple tokens in each epoch.
Based on the top-$k$ tokens with the most negative gradients, a set of candidate tokens for replacement is sampled.
Subsequently, the loss associated with each potential replacement token is computed.
The token with the minimal loss is then selected.
TRAP follows a similar process, optimizing over a discrete set of inputs.
However, such discrete optimization approaches present significant challenges.
Given the limited number of tokens and the computationally intensive nature of the optimization process, it is difficult to achieve a local optimum by design.

\section{Challenges}
\label{appendix:challenges}

Our comprehensive experiments on various methods have illustrated the difficulties of identifying the origin of a black-box LLM.
Drawing from theoretical and empirical analysis in this study, we highlight the challenges and opportunities.

The first challenge is that the auditor may not have sufficient knowledge of practical customizations applied to LLMs.
For example, the LLM may be further fine-tuned with specific system prompts and equipped with specific hyper-parameters~\cite{GULYO24, JZSLH24}.
The model's structure and weight can also be modified or rearranged~\cite{ZZWL23}.
For example, existing methods and \soft are proven ineffective for models that adopt the Llama architecture yet are trained independently (e.g., openlm-research/open\_llama\_7b).

The second challenge is the black-box access to the suspect model.
It implies that we can only query the model with texts and obtain the text response.
This is inevitable as the malicious third party tends to conceal the model's true origin.
Such black-box access inevitably limits the identification methods as they are constrained to querying the model and analyzing the text outputs.
For example, existing AE-based methods rely on discrete optimization, as they cannot directly query the LLM with embeddings but with discrete texts.
As analyzed in \autoref{section:why_fail}, such discrete optimization is hard to reach optimal, leading to performance degradation.
Additionally, the adapter variant of Instructional Fingerprint~\cite{XWMKXC24} is not applicable as we are unable to insert an F-Adapter to the black-box LLM.
Our experimental results show that \soft surpasses existing methods, though its efficacy declines when substantial weight modifications are present.

\end{document}